\newtheorem{theorem}{Theorem}[section]
\newtheorem{prop}[theorem]{Proposition}
\def\phi{ \varphi }
\DeclareMathOperator{\sgn}{sgn}
\theoremstyle{definition}
\newtheorem{definition}[theorem]{Definition}
\newtheorem{example}[theorem]{Example}
\theoremstyle{remark}
\newtheorem{remark}[theorem]{Remark}
\newcommand{\ccirc}[1]{\xymatrix@1{+<1ex>[o][F-]{#1}}}
\title{$X$-States From a Finite Geometric Perspective}
\author[1]{Colm Kelleher}
\affil[1]{Institut de Math\'ematiques de Bourgogne, UMR 5584, Universit\'e de Bourgogne Franche-Comt\'e, F-21078 Dijon, France}
\author[2]{Fr\'ed\'eric Holweck}
\affil[2]{Laboratoire Interdisciplinaire Carnot de Bourgogne, ICB/UTBM, UMR 6303 CNRS, Universit\'e Bourgogne Franche-Comt\'e, F-90010 Belfort, France} 
\author[3]{P\'eter L\'evay}
\affil[3]{MTA-BME Quantum Dynamics and Correlations Research Group, Department of Theoretical Physics, Budapest University of Technology and Economics, 1521 Budapest, Hungary}
\author[4]{Metod Saniga}
\affil[4]{Astronomical Institute of the Slovak Academy of Sciences, SK-05960 Tatransk\'a Lomnica, Slovakia}
\begin{document}

\maketitle

\begin{abstract}
It is found that $15$ different types of two-qubit $X$-states split naturally into two sets (of cardinality $9$ and $6$) once their entanglement properties are taken into account. We {characterize both the validity and entangled nature of the $X$-states with  maximally-mixed subsystems in terms of certain parameters} and show that their properties are related to a special class of geometric hyperplanes of the symplectic polar space of order two and rank two. Finally, we introduce the concept of hyperplane-states and briefly address their non-local properties.
\end{abstract}

\section{Introduction}
Two-qubit $X$-states are usually introduced in the literature as two-qubit density matrices with an $X$-shape,
\begin{equation}\label{rho}
 \rho=\begin{pmatrix}
       \rho_{11} & 0 & 0 & \rho_{14}\\
       0         & \rho_{22} & \rho_{23} & 0\\
       0        & \rho_{32} & \rho_{33} & 0\\
       \rho_{41} & 0 & 0 & \rho_{44}
      \end{pmatrix}.
\end{equation}

The EPR-states and the Werner states are particular cases of $X$-states \cite{nielsen} and their properties have extensively been studied from quantum information theoretic perspective \cite{yu2005evolution, chen2011quantum}. It was A.R.P. Rau who first pointed out, in a series of papers \cite{rau2009algebraic,vinjanampathy2010generalized,rau2009mapping}, the underlying algebraic structure defining an $X$-state. Taking $X,Y,Z$ to represent the usual Pauli matrices and $I$ to be the identity matrix, Eq. (\ref{rho}) can be rewritten in the following way
\begin{equation}\label{fano}
 \rho=\dfrac{1}{4}(I\otimes I+\tau^A Z\otimes I+\tau^B I\otimes Z+\beta_{zz} Z\otimes Z+\beta_{xx} X\otimes X+\beta_{yy} Y\otimes Y+\beta_{yx} Y\otimes X+\beta_{xy} X\otimes Y,
\end{equation}
where $\tau$'s and $\beta$'s are real coefficients that can be calculated from $\rho_{ij}$. Rau also noticed that the non-trivial two-qubit operators involved in Eq. (\ref{fano}) have the algebraic structure of the projective plane of order two, the Fano plane, if one considers their products (Figure \ref{fanoplane}).
\begin{figure}
\centering
 \includegraphics[width=5cm]{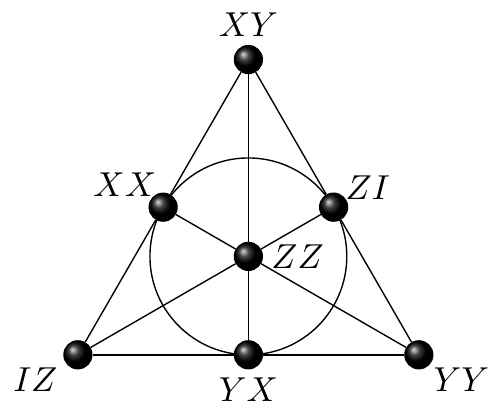}
 \caption{Up to a phase, the multiplication properties of the $7$ operators of Eq. (\ref{fano}). They form a  Fano plane. The three lines intersecting at $ZZ$ correspond to three sets of mutually commuting operators. One notices that the distinguished observable $Z\otimes Z$ commutes with all the remaining ones.
($AB$ is a short-hand for $A \otimes B$). All $15$ Fano planes of PG$(3,2)$ are explicitly listed in Appendix \ref{appA}.}\label{fanoplane}
\end{figure}

Up to a phase factor, there are $15$ non-trivial two-qubit observables and, therefore, $15$ labeled Fano planes similar to that depicted in Figure \ref{fanoplane}, each of them hosting a maximal set of observables commuting with a given one. Rau \cite{rau2009algebraic} further suggested to extend the definition of $X$-states to all two-qubit density matrices that would yield similar underlying Fano structures once decomposed on a particular two-qubit Pauli basis.

In this article we consider the $15$ different kinds of $X$-states following Rau's idea and show that they split into two groups. Group 1 consists of $X$-like states that are always separable irrespectively of the choice of parameters $\tau$'s and $\beta$'s. Group 2 entails $X$-states that can be entangled if the parameters satisfy certain conditions. The standard $X$-state given by Eq. (\ref{rho}) and Eq. (\ref{fano}) belongs to Group 2. {For Group 2 $X$-states with maximally mixed subsystems,  we also give explicit constraints on their parameters to decide on the validity and entanglement of particular states.} Then one proposes an alternative finite-geometric definition of $X$-states making use of a particular type of geometric hyperplanes -- so-called perp-sets -- of the symplectic polar space of  order two and rank two, $\mathcal{W}(3,2)$. The existence of two distinct groups of $X$-states is here embodied in the intersection properties of perp-sets with a specific hyperbolic quadric of $\mathcal{W}(3,2)$. This then naturally leads to a generalization of $X$-states for all the remaining types of hyperplanes of $\mathcal{W}(3,2)$, with subsequent analysis of their entanglement properties. Finally, we briefly address non-local properties for $X$-states of Group $2$ as well as for the other hyperplane-states that can produce entangled states.\\ \\

{\bf Symbols and Notation}. Let $\mathcal{P}_2=\{s A_1\otimes A_2, A_i\in \{I,X,Y,Z\}, s\in \{\pm 1,\pm i\}\}$ be the group of two-qubit Pauli observables where $X=\begin{pmatrix}
  0 & 1\\
                     1 & 0 
                    \end{pmatrix}$, 
$Y=\begin{pmatrix} 
0 & -i \\
i & 0    
   \end{pmatrix}$ and $Z=\begin{pmatrix} 
   1 & 0 \\ 
   0 & -1
   \end{pmatrix}$, $I$ being the identity matrix. As already mentioned, the tensor product of observables will be shorthanded
as $A_1\otimes A_2=A_1A_2$. Disregarding a phase factor, $\pm 1, \pm i$, the 15 nontrivial two-qubit Pauli operators can be identified with the 15 points of the projective space over the two element field, PG$(3,2)$, as follows. Let $A_1=Z^{\mu_1}\times X^{\nu_1}$ and $A_2=Z^{\mu_2}\times X^{\nu_2}$ be two Pauli matrices and $A_1A_2=(Z^{\mu_1}\times X^{\nu_1})(Z^{\mu_2}\times X^{\nu_2})$. Then $A_1A_2$ is a non-trivial two qubit Pauli operator iff $(\mu_1,\nu_1,\mu_2,\nu_2)$ is a nonzero vector of $V^4$, the four-dimensional vector space over the two element field GF$(2)=\{0,1\}$, and thus can be mapped to a point of PG$(3,2)$:
\begin{equation}\label{eq:pi}
 \pi:\left\{\begin{array}{lll} 
             \mathcal{P}_2\backslash I_2 & \to & \text{PG}(3,2),\\
             s(Z^{\mu_1}\times X^{\nu_1})(Z^{\mu_2}\times X^{\nu_2}) & \mapsto & [\mu_1:\nu_1:\mu_2:\nu_2].
            \end{array}\right.
\end{equation}

For instance, the observables $\{IX,-IX,iIX,-iIX\}$ are mapped to the point $[0:0:0:1]$ and $\{YZ,-YZ,iYZ,-YZ\}$ are mapped to the point $[1:1:1:0]$. A line of PG$(3,2)$ is made of triplets of points $(p,q,r)$ such that $r=p+q$. The corresponding classes of two-qubit operators $\overline{\mathcal{O}_p},\overline{\mathcal{O}}_q,
\overline{\mathcal{O}}_r$ satisfy $\overline{\mathcal{O}}_p\times \overline{\mathcal{O}}_q=\overline{\mathcal{O}}_r$. PG$(3,2)$ contains $15$ Fano planes that can neatly be parametrized by the $15$ points of PG$(3,2)$. Given a non-degenerate bilinear form \begin{equation}\label{sigma}\sigma(p,q)=p_1q_2+p_2q_1+p_3q_4+p_4q_3\end{equation} for $p=[p_1:p_2:p_3:p_4]$ and $q=[q_1:q_2:q_3:q_3] \in \text{PG}(3,2)$,  one can define the $15$ Fano planes of PG$(3,2)$ as $F_p=\{q\in \text{PG}(3,2), \sigma(p,q)=0\}$. For instance, $F_{[1:0:1:0]}=\{[1:0:1:0],[0:1:1:1],[1:1:0:1],[0:0:1:0],[1:0:0:0],[0:1:0:1],[1:1:1:1]\}$; it is easy to see that this Fano plane is the one shown in Figure \ref{fanoplane}. For the convenience of the reader, the $15$ different Fano planes of PG$(3,2)$ are given in terms of operators in Appendix \ref{appA}.

\section{Entangled $X$-states split into two groups}\label{entanglement}
Given a two-qubit system with density matrix $\rho$, its `entanglement status'  can be discerned from the fact whether the partial transpose $\rho^{\Gamma}$ is positive-semidefinite, or not \cite{horodecki2001separability}, where the partial transpose can be taken over any subsystem (PPT criterion). Namely, if $\rho^{\Gamma}$ is positive-semidefinite the system is separable, otherwise it is entangled. In what follows a density matrix $\rho$ is said to be \textit{valid} iff it is positive-semidefinite.\\
{If the $15$ types of $X$-states are expressed in terms of Pauli operators (Eq. (\ref{fano})), we find that $6$ of them are endowed with $4$ $\tau$  coefficients and $3$ $\beta$ ones; these are states that belong to Group 1. The $9$ types of Group 2 entail states with $2$ $\tau$'s and $5$ $\beta$'s.}
 As already pointed out, the state given by Eq. (\ref{fano}) is from Group 2. \\
The eigenvalues $\lambda_{i}$ of density matrices of Group 1 states are found to be equal to those of their partial transposes $\lambda^{\Gamma}_{i}$,
\begin{equation}
    \lambda_{i} = \lambda^{\Gamma}_{i} = \left\{ 
    \begin{array}{l}
        \frac{1}{4}\left( \tau_{0} + 1 \pm \sqrt{(\beta_{1} + \tau_{1})^{2} + (\beta_{2} + \tau_{2})^{2} + (\beta_{3} + \tau_{3})^{2}} \right) \\
        \frac{1}{4}\left( - \tau_{0} + 1 \pm \sqrt{(\beta_{1} - \tau_{1})^{2} + (\beta_{2} - \tau_{2})^{2} + (\beta_{3} - \tau_{3})^{2}} \right),  
    \end{array}\right.
\end{equation}
where we have introduced generalised parameters for the 6 different states - the three $\beta_{i}$ parameters correspond to three correlation operators sharing a common tensor factor, the three $\tau_{i}, i = \{1,2,3\}$, correspond to the coordinates of the Bloch vector of one partially reduced state, and $\tau_{0}$  is the coordinate of the Bloch vector of the remaining partially reduced state. For example, $F_{[0:0:0:1]}$ maps to the following Group 1 state:
\begin{equation}
    \rho_{[0:0:0:1]} = \frac{1}{4}[I \otimes I + \tau^{A}_{x}X \otimes I + \tau^{A}_{y}Y \otimes I + \tau^{A}_{z}Z \otimes I + \tau^{B}_{z}I \otimes Z + \beta_{xz} X \otimes Z + \beta_{yz} Y \otimes Z + \beta_{zz} Z \otimes Z].
\end{equation}
As $\{\lambda_{i}\} = \{\lambda^{\Gamma}_{i}\}$, the positive-semidefinite criteria for $\rho$ and $\rho^{\Gamma}$ render all Group 1 states separable.\\
We can also introduce generalised parameters for Group 2 states:
\begin{equation}\label{G2_gen_params}
    \begin{array}{ll}
    (\tau_{1}, \tau_{2}) &= (\tau^{A}, \tau^{B}), \\
    \beta_{0} &:= \{\beta_{ij} | \beta_{ik} = 0 = \beta_{lj} \quad \forall k \neq i, l \neq j\}, \\
    M &= \begin{pmatrix} 
        \beta_{1} & \beta_{2} \\
        \beta_{3} & \beta_{4}
        \end{pmatrix} := \beta(i,j) \quad \text{for} \quad \beta_{0} \neq \beta_{ij},
\end{array}
\end{equation}
where $\tau_{i}$ are the coordinates of the Bloch vectors of the partially reduced states  (always one nonzero coordinate for each subsystem), $\beta_{0}$ is the unique $\beta$ parameter whose operator has no common factors with the other nontrivial operators, and $M$ is the submatrix formed from the $\beta$ matrix by removing the row and column containing $\beta_{0}$. For the Eq. \eqref{fano} example,
\begin{equation*}
\begin{array}{rl}\vspace{2pt}
    \beta &= \begin{pmatrix}
    \beta_{xx} & \beta_{xy} & \\
    \beta_{yx} & \beta_{yy} & \\
     & & \beta_{zz}
     \end{pmatrix},\\\vspace{2pt}
    M &= \begin{pmatrix}
    \beta_{xx} & \beta_{xy} \\
    \beta_{yx} & \beta_{yy}
    \end{pmatrix},\\
    \beta_{0} &= \beta_{zz}.
\end{array}
\end{equation*}\\
\\
In this construction, eigenvalues of the density matrix of a Group 2 state  have two general forms:\\
\textsc{Type I}:\\
\begin{equation}\label{eq:group2_type1_eigs}
\begin{array}{l}
    \lambda_{i\textsc{,I}} = \left\{ \begin{array}{l}
    \frac{1}{4} \left( \beta_{0} + 1 \pm \sqrt{(\beta_{1} - \beta_{4})^{2} + (\beta_{2} + \beta_{3})^{2} + (\tau_{1} + \tau_{2})^{2}} \right), \\
    \frac{1}{4} \left( -\beta_{0} + 1 \pm \sqrt{(\beta_{1} + \beta_{4})^{2} + (\beta_{2} - \beta_{3})^{2} + (\tau_{1} - \tau_{2})^{2}} \right),\end{array} \right. \\
    \lambda^{\Gamma}_{i\textsc{,I}} = \left\{ \begin{array}{l}\frac{1}{4} \left( \beta_{0} + 1 \pm \sqrt{(\beta_{1} + \beta_{4})^{2} + (\beta_{2} - \beta_{3})^{2} + (\tau_{1} + \tau_{2})^{2}} \right), \\
    \frac{1}{4} \left( -\beta_{0} + 1 \pm \sqrt{(\beta_{1} - \beta_{4})^{2} + (\beta_{2} + \beta_{3})^{2} + (\tau_{1} - \tau_{2})^{2}} \right). \end{array}\right. 
\end{array}
\end{equation} 
Type II is obtained by the switch $(\lambda_{i} \leftrightarrow \lambda^{\Gamma}_{i})$, or, equivalently, by the transformations $(\beta_{0} \mapsto -\beta_{0}), (\tau_{1(2)} \mapsto -\tau_{1(2)})$.\\
These forms allow for entangled states, and we will later show the corresponding conditions for states with maximally mixed subsystems. We also assign to each Group 2 $X$-state a parameter $t$ to denote its type; $t=1$ for Type I states, and $t=2$ for Type II ones.\\
The parameters in $\lambda, \lambda^{\Gamma}$ above can be expressed as coordinates in $\mathbf{R}^{2}$, with the radical term giving the Euclidean distance between pairs of points. This handy representation will allow us to find regions of validity, separability and entanglement, using the following proposition:
\begin{prop}\label{prop:beta_0_L_sign}
Let $L^{+}:= \sqrt{(\beta_{1} + \beta_{4})^{2} + (\beta_{2} - \beta_{3})^{2}}$ and $L^{-}:= \sqrt{(\beta_{1} - \beta_{4})^{2} + (\beta_{2} + \beta_{3})^{2}}$. For a valid, entangled generalised X-state $\rho$ with $\tau_{1} = \tau_{2}=0$ and eigenvalues $\lambda_{i\textsc{,I}} = \{1 \pm \beta_{0} (\pm) L^{\mp}\}$, $\beta_{0} < 0 \iff L^{+} > L^{-}$.
\end{prop}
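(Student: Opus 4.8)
The plan is to translate both the validity (positive-semidefiniteness of $\rho$) and the entanglement (a negative eigenvalue of $\rho^{\Gamma}$) into explicit inequalities among $\beta_{0}$, $L^{+}$ and $L^{-}$, and then read off the claimed equivalence by a short case analysis. First I would record the four eigenvalues of $\rho$ and of $\rho^{\Gamma}$ obtained from the Type I formulas \eqref{eq:group2_type1_eigs} upon setting $\tau_{1}=\tau_{2}=0$: up to the harmless common factor $1/4$, the eigenvalues of $\rho$ are $1+\beta_{0}\pm L^{-}$ and $1-\beta_{0}\pm L^{+}$, while those of $\rho^{\Gamma}$ are $1+\beta_{0}\pm L^{+}$ and $1-\beta_{0}\pm L^{-}$. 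Since $L^{\pm}\geq 0$ by definition, within each conjugate pair the smaller eigenvalue is the one carrying the minus sign.

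Second, validity of $\rho$ is then equivalent to the two binding inequalities
\[
1+\beta_{0}\geq L^{-} \qquad\text{and}\qquad 1-\beta_{0}\geq L^{+}.
\]
The remaining two eigenvalues are automatically nonnegative, since each of these already forces $1\pm\beta_{0}\geq 0$. Likewise, by the PPT criterion $\rho$ is entangled precisely when $\rho^{\Gamma}$ has a negative eigenvalue; validity of $\rho$ renders $1+\beta_{0}+L^{+}$ and $1-\beta_{0}+L^{-}$ nonnegative, so the only candidates for negativity are the two minus-branches, and entanglement becomes equivalent to
\[
L^{+}>1+\beta_{0} \qquad\text{or}\qquad L^{-}>1-\beta_{0}.
\]

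Third, I would run the case analysis. Suppose the first entanglement condition $L^{+}>1+\beta_{0}$ holds. Combining it with the validity bound $L^{-}\leq 1+\beta_{0}$ yields $L^{+}>L^{-}$, and combining it with $L^{+}\leq 1-\beta_{0}$ yields $1+\beta_{0}<1-\beta_{0}$, i.e.\ $\beta_{0}<0$. Symmetrically, if instead $L^{-}>1-\beta_{0}$, the two validity inequalities force $L^{-}>L^{+}$ and $\beta_{0}>0$. The two entanglement conditions cannot hold at once, since together they would demand $\beta_{0}<0$ and $\beta_{0}>0$; moreover $\beta_{0}=0$ is impossible, as it would make validity read $L^{\pm}\leq 1$ while entanglement requires $L^{+}>1$ or $L^{-}>1$. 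Hence for a valid, entangled state exactly one case occurs, and in it $\beta_{0}<0\iff L^{+}>L^{-}$, which is the assertion.

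The computation is elementary once the eigenvalue list is in hand; the one point requiring care — and the step I would treat as the main obstacle — is the bookkeeping that matches each eigenvalue of $\rho$ and $\rho^{\Gamma}$ to the correct radical $L^{+}$ or $L^{-}$ after the specialization $\tau_{1}=\tau_{2}=0$, together with checking that validity promotes the ``$+L$'' eigenvalues to be automatically nonnegative so that only the minus-branches can witness either non-validity or a negative partial transpose. Getting this pairing right is exactly what makes the final dichotomy clean.
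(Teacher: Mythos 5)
Your proposal is correct and follows essentially the same route as the paper's proof: you extract the validity inequalities $1+\beta_{0}\geq L^{-}$, $1-\beta_{0}\geq L^{+}$ and the PPT-violation disjunction $L^{+}>1+\beta_{0}$ or $L^{-}>1-\beta_{0}$ from the Type~I eigenvalue lists, then eliminate the inconsistent branch exactly as the paper does. Your only deviation is presentational --- you establish the biconditional through a single exhaustive case split (including the observation that both entanglement conditions cannot hold simultaneously and that $\beta_{0}=0$ is excluded) rather than proving the two implications separately, which if anything is slightly cleaner.
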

\begin{proof}
($\implies$): The validity constraint yields
\begin{equation}\label{eq:group_2_pos_def_L}
\begin{array}{llll}
1-\beta_{0} &\geq L^{+} &\qquad\text{ and } \qquad1+\beta_{0} &\geq L^{-}.
\end{array}
\end{equation}
For $\beta_{0} < 0$ and min$\{\lambda^{\Gamma}_{i\textsc{,I}}\}<0$, 
\begin{equation*}
\begin{array}{llll}
    1-|\beta_{0}| &< L^{+} \leq 1+|\beta_{0}| \qquad\text{ or } \qquad &1+|\beta_{0}| &< L^{-} \leq 1-|\beta_{0}|.
    \end{array}
\end{equation*}
Only the left-hand-side inequalities are consistent, and can be combined with the validity conditions to give $L^{-} \leq 1-|\beta_{0}| < L^{+}$. \\
($\Leftarrow$): We have
\begin{equation*}
    1-\beta_{0} \geq L^{+} > L^{-},
\end{equation*}
\begin{equation*}
    \implies 1+\beta_{0} < L^{+} \leq 1-\beta_{0} \qquad\text{ or } \qquad 1-\beta_{0} < L^{-} \leq 1 - \beta_{0},
\end{equation*}
which again invalidates the right-hand-side inequalities, giving
\begin{equation*}
    -\beta_{0} > \beta_{0} \implies \beta_{0} < 0.
\end{equation*}
\end{proof}
\noindent
Now, we can simply take $\beta_{0} < 0$ and the PPT criterion acquires the form
\begin{equation}\label{eq:group_2_PPT_L_plus}
1 - |\beta_{0}| < L^{+}.
\end{equation}
Taking $\beta_{0}>0$ yields
\begin{equation}\label{eq:group_2_PPT_L_minus}
1 - |\beta_{0}| < L^{-}.
\end{equation}
This thus allows us to consider the $\beta_{0}<0$ case to determine the regions of validity, separability and entanglement, and then make suitable transformations for the $\beta_{0}>0$ case.
\begin{prop}
\label{prop:regions_sep_ent}
Let $C:= (\beta_{4},\beta_{3}), D:=(\beta_{1},\beta_{2}), E:= (\beta_{1},-\beta_{2}), F:=(\beta_{4},-\beta_{3}) \in \mathbf{R}^{2}$, $r:= 1 - |\beta_{0}|$, $R:=1+|\beta_{0}|$, and $(C,r)$ be the closed disc centered at $C$ with radius $r$, etc.\\
For a given Group 2 $X$-state with maximally-mixed subsystems represented by $\rho$ with given $\beta_{0}, \beta_{3}, \beta_{4}$, its region of validity is given by $\mathcal{V} := (C,r)\cap(-C,R)$, the region of separability by $\mathcal{S}:=(C,r)\cap(-C,r)$, and the region of entanglement by $\mathcal{E}:=\mathcal{V}\setminus\mathcal{S}$. The matrix $\rho$ is valid iff $E \in  (-1)^{t}\sgn(\beta_{0})\mathcal{V}$, separable for $E \in \mathcal{S}$, and entangled for $E \in (-1)^{t}\sgn(\beta_{0})\mathcal{E}$. The following conditions hold: 
\begin{enumerate}
    \item $\mathcal{V} \neq \emptyset \iff \beta_{3}^{2} + \beta_{4}^{2} \leq 1$;
    \item $\mathcal{S} \neq \emptyset \iff \beta_{3}^{2} + \beta_{4}^{2} \leq (1 - |\beta_{0}|)^{2}$.
\end{enumerate}
Equivalently, for a given generalised Group 2 X-state $\rho$ with given $\beta_{0}, \beta_{1}, \beta_{2}$, its region of validity for $F$ is given by $\mathcal{V'} := (E,r)\cap(-E,R)$, the region of separability by $\mathcal{S'}:=(E,r)\cap(-E,r)$, and the region of entanglement by $\mathcal{E'}:=\mathcal{V'}\setminus\mathcal{S'}$ with corresponding conditions: 
\begin{enumerate}
    \item $\mathcal{V'} \neq \emptyset \iff \beta_{1}^{2} + \beta_{2}^{2} \leq 1$;
    \item $\mathcal{S'} \neq \emptyset \iff \beta_{1}^{2} + \beta_{2}^{2} \leq (1 - |\beta_{0}|)^{2}$.
\end{enumerate}
$\rho$ is valid iff $F \in (-1)^{t}\sgn(\beta_{0})\mathcal{V'}$, separable for $F \in \mathcal{S'}$, and entangled for $F \in (-1)^{t}\sgn(\beta_{0})\mathcal{E'}$.
\end{prop}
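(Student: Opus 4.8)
The plan is to push everything through the spectra of \eqref{eq:group2_type1_eigs} specialized to $\tau_1=\tau_2=0$ and then to read the positivity conditions as statements about two Euclidean discs. First I would set $\tau_1=\tau_2=0$ in \eqref{eq:group2_type1_eigs}, so that (up to the overall factor $1/4$) the two spectra collapse to
\[
\{4\lambda_i\}=\{1+\beta_0\pm L^-,\;1-\beta_0\pm L^+\},\qquad \{4\lambda_i^\Gamma\}=\{1+\beta_0\pm L^+,\;1-\beta_0\pm L^-\}
\]
with $L^\pm$ as in Proposition~\ref{prop:beta_0_L_sign}. Since $L^\pm\ge 0$, only the minus branches can be negative, so $\rho\succeq 0$ (validity) is equivalent to $L^-\le 1+\beta_0$ and $L^+\le 1-\beta_0$, while $\rho^\Gamma\succeq 0$ is equivalent to $L^+\le 1+\beta_0$ and $L^-\le 1-\beta_0$. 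Imposing both (validity together with the PPT condition) gives $L^-\le\min(1\pm\beta_0)=r$ and $L^+\le r$ with $r=1-|\beta_0|$, i.e.\ separability.

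The geometric heart is that these radicals are exactly the distances I need. Expanding the squares gives $L^-=\lvert E-C\rvert$ and $L^+=\lvert E-(-C)\rvert$, where $E=(\beta_1,-\beta_2)$ and $C=(\beta_4,\beta_3)$. Taking $\beta_0<0$ (so that $1+\beta_0=r$ and $1-\beta_0=R$), validity becomes $E\in(C,r)\cap(-C,R)=\mathcal V$; separability becomes $E\in(C,r)\cap(-C,r)=\mathcal S$; and entanglement, being validity minus separability, becomes $E\in\mathcal E=\mathcal V\setminus\mathcal S$. This settles the $\beta_0<0$, Type~I case directly, establishing the membership statements with prefactor $(-1)^t\sgn(\beta_0)=+1$.

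Next I would propagate the signs. Flipping $\sgn(\beta_0)$ interchanges the radii $r$ and $R$ attached to the antipodal centres $C$ and $-C$; because the centres are antipodal, swapping which one carries the small radius is the same as reflecting the validity locus through the origin, $\mathcal V\mapsto-\mathcal V$ (and likewise $\mathcal E\mapsto-\mathcal E$), whereas $\mathcal S=(C,r)\cap(-C,r)$ has two equal radii and is invariant under $X\mapsto-X$ — this is precisely why only validity and entanglement carry a sign. A change of type $t=1\to 2$ swaps $\lambda_i\leftrightarrow\lambda_i^\Gamma$, which at $\tau_1=\tau_2=0$ has the same effect as $\beta_0\mapsto-\beta_0$; composing the two effects produces the single prefactor $(-1)^t\sgn(\beta_0)$ in the membership statements for $\mathcal V$ and $\mathcal E$, while separability stays sign-free. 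The primed assertions then follow from the symmetric roles of the two parameter pairs: running the identical argument with $(\beta_1,\beta_2)$ held fixed and $(\beta_3,\beta_4)$ free, with $F$ as the moving point, reproduces the same distance identities and hence the primed regions with no new work.

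Finally, the nonemptiness criteria reduce to the elementary fact that two closed discs meet iff the distance between their centres is at most the sum of their radii. For $\mathcal V$ the centres $\pm C$ lie at distance $2\lvert C\rvert=2\sqrt{\beta_3^2+\beta_4^2}$ and the radii sum to $r+R=2$, giving $\mathcal V\neq\emptyset\iff\beta_3^2+\beta_4^2\le 1$; for $\mathcal S$ the radii sum to $2r$, giving $\mathcal S\neq\emptyset\iff\beta_3^2+\beta_4^2\le(1-|\beta_0|)^2$ (using $|\beta_0|\le 1$). The primed versions just replace $\lvert C\rvert$ by $\lvert E\rvert=\sqrt{\beta_1^2+\beta_2^2}$. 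I expect the main obstacle to be the bookkeeping in the third step — keeping straight how the four combinations of $\sgn(\beta_0)$ and $t$ act on $\mathcal V$, $\mathcal S$, $\mathcal E$, and checking that the separability locus alone is reflection-invariant — rather than any individual inequality.
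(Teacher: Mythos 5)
Your proposal is correct and takes essentially the same route as the paper's proof: it reads $L^{\pm}$ as the Euclidean distances from $E$ to $\mp C$ in the $\tau_{1}=\tau_{2}=0$ spectra, derives $\mathcal{V}$, $\mathcal{S}$, $\mathcal{E}$ from the positivity of $\rho$ and $\rho^{\Gamma}$, propagates the prefactor $(-1)^{t}\sgn(\beta_{0})$ through the $\beta_{0}\mapsto-\beta_{0}$ and type-swap symmetries (correctly observing that $\mathcal{S}$ alone is reflection-invariant), and settles nonemptiness by the disc-intersection criterion — the paper differs only cosmetically, phrasing validity via Proposition \ref{prop:beta_0_L_sign} and tracking $-E$ before reflecting through the origin. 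One caveat on the primed half: with $F=(\beta_{4},-\beta_{3})$ as the moving point your ``same distance identities'' actually give $L^{-}=|F-D|$ and $L^{+}=|F+D|$ with $D=(\beta_{1},\beta_{2})$, i.e.\ discs centred at $\pm D$ rather than the statement's $\pm E$ (note $|F-E|^{2}=(\beta_{1}-\beta_{4})^{2}+(\beta_{2}-\beta_{3})^{2}$, which is neither $(L^{+})^{2}$ nor $(L^{-})^{2}$), so the printed regions $(E,r)\cap(-E,R)$ are recovered only after reflecting both the moving point and the centres in the $x$-axis ($F\mapsto C$, $D\mapsto E$) — a gloss shared by the paper, whose own proof dispatches this half with ``follows readily from the same chain of arguments.''
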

\begin{proof}
We consider a state of Type \textsc{I} with maximally mixed subsystems, with $\beta_{0}<0$, and prove the first half of the proposition. The second half follows readily from the same chain of arguments. For validity, by Proposition \ref{prop:beta_0_L_sign}, we have
\begin{subequations}\label{eq:validity_type1_beta_less_0}
\begin{equation}
    L^{+}\leq1+|\beta_{0}| \label{eq:validity_L_plus},
\end{equation}
\begin{equation}
    L^{-}\leq1-|\beta_{0}| \label{eq:validity_L_minus}.
\end{equation}
\end{subequations}
The quantity $L^{-}$ ($L^{+}$) is given by the Euclidean distance between points $C$ and $E$ ($C$ and $-E$). Via the representation  of the parameters in $\mathbf{R}^{2}$ (see Figure \ref{fig:validity_separability_graph_colours_main}), the condition (\ref{eq:validity_L_minus}) means that the point $E$ is contained within the closed disc $(C,r)$ and, equally, that the point $-E$ lies in $(-C,r)$. Condition (\ref{eq:validity_L_plus}) means that the point $-E$ is contained within $(C,R)$. Thus, for both to hold, $-E$ must lie within $(-C,r)\cap(C,R)$. Reflecting through the origin, this implies that $E \in \mathcal{V}:=(C,r)\cap(-C,R)$. This region is nonempty when the points $C, -C$ are closer to each other than $r+R$, i.e. when $|C|^{2} = \beta_{4}^{2}+\beta_{3}^{2} \leq 1$. For separability, we need that min$\{\lambda^{\Gamma}_{i\textit{,I}}\}\geq0$, i.e. that $L^{+}\leq1-|\beta_{0}|$. This gives $-E \in (C,r)$, or, equivalently, that $E \in (-C,r)$. For validity and separability that requires $E \in \mathcal{S}:=(C,r)\cap(-C,r)$, which when reflected through the origin also gives $-E\in \mathcal{S}$. This region is nonempty when $|C|^{2}\leq r^{2}=(1-|\beta_{0}|)^{2}$. For entanglement that occurs when the state is valid but not separable we obtain $E \in \mathcal{V} \backslash \mathcal{S}$. We note now that for the case $\beta_{0}>0$, via  proposition \ref{prop:beta_0_L_sign}, the conditions (\ref{eq:validity_type1_beta_less_0}) exchange $L^{+} \leftrightarrow L^{-}$, or, equivalently, $E \mapsto -E$. This does not affect the separability condition as $\mathcal{S}$ is symmetric under these transformations, but it does change the sign of the regions $\mathcal{V}, \mathcal{E}$ that $E$ must belong to. Thus, by multiplying by the sign of $\beta_{0}$ we can cover both the scenarios. Finally, we note that under the transformation $\beta_{0} \mapsto -\beta_{0}$ we can infer the conditions for Type \textsc{II}. As this sign change is equivalent to exchanging $L^{+} \leftrightarrow L^{-}$ in proposition (\ref{prop:beta_0_L_sign}), it means that we must once again multiply by a factor of $(-1)$ to exchange types. Then we see that $E \in (-1)^{t}\sgn(\beta_{0})\mathcal{V}$ covers our initial case and any of the mentioned transformations.
\end{proof}
{\begin{remark}
 Proposition \ref{prop:regions_sep_ent} provides a constructive way of generating examples of generalised $X$-states with maximally-mixed subsystems that are entangled as shown in Figures \ref{fig:Validity_separability_graph} and \ref{fig:validity_separability_graph_colours_main}.
\end{remark}}
\begin{figure}[!h] \centering
    \centering
    \includegraphics[height=7cm]{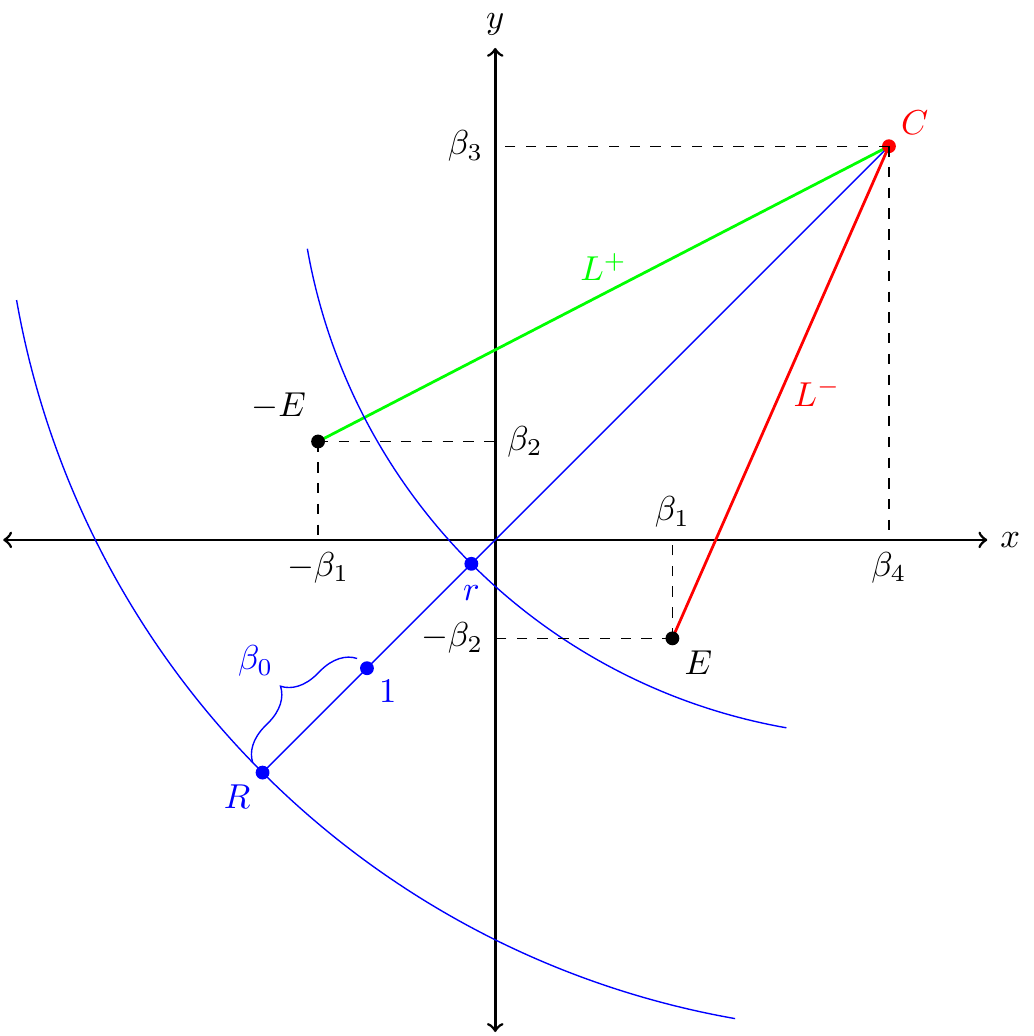}
    \caption{The parameter space for a positive-definite Group 2 state (i.e., a state with maximally-mixed subsystems and of Type I, $\beta_{0}<0$). The parameters $\beta_{1}, \beta_{4}$ are plotted on the $x$-axis, $\beta_{2},\beta_{3}$ on the $y$-axis. As $E \in (C,r) \land -E \in (C,R)$, this state is valid as per proposition \ref{prop:regions_sep_ent}. Moreover, as $L^{+} > r = 1 - |\beta_{0}|$, the state is also entangled.}
    \label{fig:Validity_separability_graph}
\end{figure}
\vspace*{0.5cm}
\begin{figure}[h!]
   \centering
\includegraphics[height=7cm]{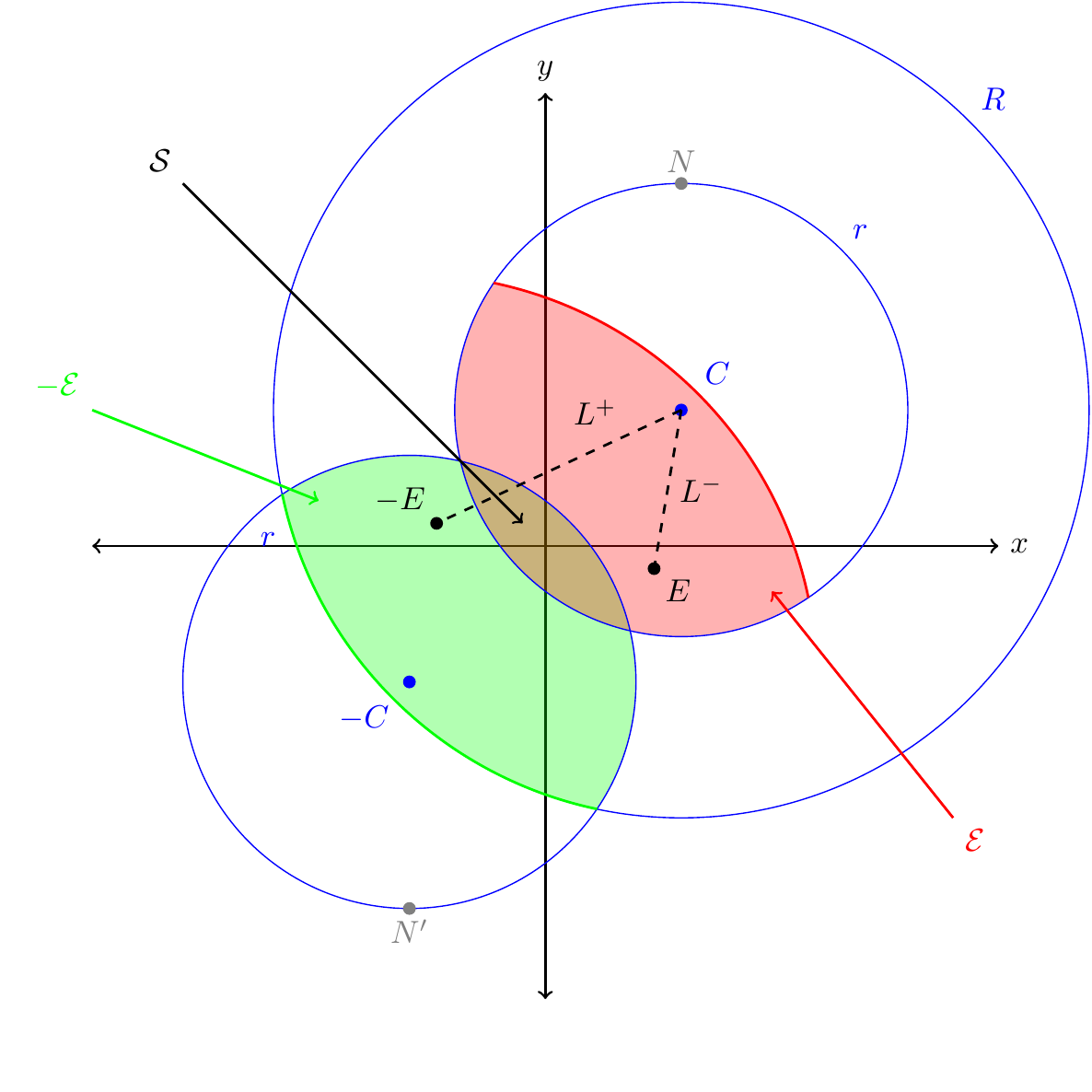}
   \caption{The regions of validity, separability and entanglement for a (maximally-mixed) Group 2 $X$-state, in the same coordinate system as employed in Figure \ref{fig:Validity_separability_graph}. The points of the  closed disc $(-C,r)$ are those of $(C,r)$ when reflected through the origin, as illustrated by the labels $N, N'$. The regions of entanglement $\mathcal{E}$ and separability $\mathcal{S}=(C,r)\cap(-C,r)$ are shown, with the region of validity $\mathcal{V} = (C,r)\cap(-C,R) = \mathcal{E} \oplus \mathcal{S}$. The state shown is entangled for $(-1)^{t}\sgn({\beta_{0}})E \in \mathcal{E}$.}
    \label{fig:validity_separability_graph_colours_main}
\end{figure}

\section{$X$-states as geometric hyperplanes of $\mathcal{W}(3,2)$}
Let us now focus on the symplectic polar space of order two and rank two, $\mathcal{W}(3,2)$ , i.e. the space of all totally isotropic subspaces of PG$(3,2)$ with respect to a given symplectic form. The space $\mathcal{W}(3,2)$ encodes geometrically the commutation relations between the elements  of $\mathcal{P}_2$ in the following sense (see \cite{saniga2007multiple}). Given the symplectic form $\sigma$ of Eq.(\ref{sigma}),  a totally isotropic line $(p,q,r)$ of PG$(3,2)$ is a line such that $\sigma(p,q)=\sigma(p,r)=\sigma(q,r)=0$. Then any representatives  $\mathcal{O}_p$, $\mathcal{O}_q$ and $\mathcal{O}_r$ of the classes mapped to $p,q$ and $r$ such that $(p,q,r$) is a totally isotropic line, represent a triple of mutually commuting observables. 

$\mathcal{W}(3,2)$ can also be viewed as a point-line incidence structure $\mathcal{G}=(\mathcal{P},\mathcal{L},\mathcal{I})$, where $\mathcal{P}$ are the $15$ points and $\mathcal{L}$ are the $15$ totally isotropic lines  of PG$(3,2)$ , $\mathcal{I}\subset \mathcal{P}\times \mathcal{L}$ being the incidence relation, i.e. a set-theoretic inclusion of points in lines. The point-line geometry corresponding to $\mathcal{W}(3,2)$ is a unique triangle-free $15_3$-configuration ($15$ points/lines, $3$ points per line and $3$ lines through a point) known as the Doily, or the Cremona-Richmond configuration.
Restricting to canonical representatives of the classes of $\mathcal{P}_2$ (i.e. $s=1$ in Eq. (\ref{eq:pi})), one obtains one of the `standard' parametrizations of the Doily as illustrated in Figure \ref{thedoily}.
\begin{figure}[!h] \centering
\centering
 \includegraphics[width=6cm]{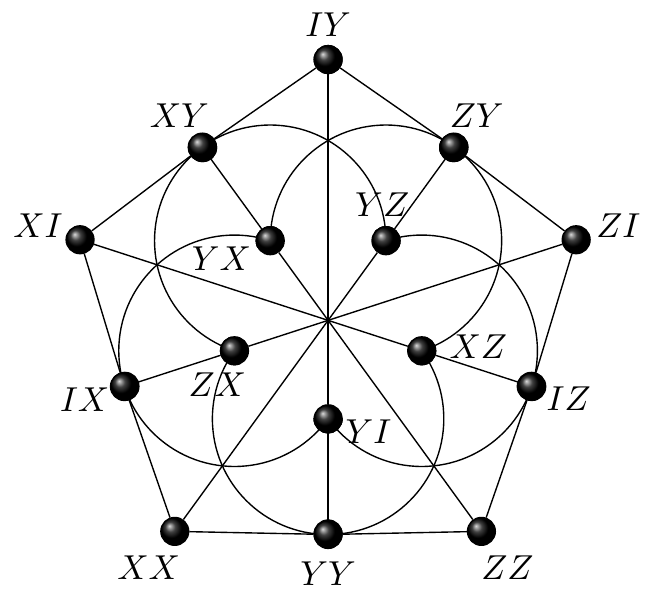}
 \caption{The Doily with its points labeled by two-qubit observables. Two observables commute if they are collinear. Note that lines are represented not only by straight segments, but also by arcs of circles.}\label{thedoily}
\end{figure}

Labeled Fano planes were defined as sets of observables commuting with a given observable. The Fano plane of Figure \ref{fanoplane} represents the set of observables commuting with $ZZ$; geometrically, it is the set of points $q$ of PG$(3,2)$ such that $\sigma(p,q)=0$ for $p=[1:0:1:0]$. 
In our labeled Doily the trace of the labeled Fano plane corresponds to three concurrent lines; the corresponding set of points is called a perp-set of the point of concurrence. 
The perp-set of the point $ZZ$ is illustrated in Figure \ref{zzperp}.  
\begin{figure}[!h] \centering
\includegraphics[width=6cm]{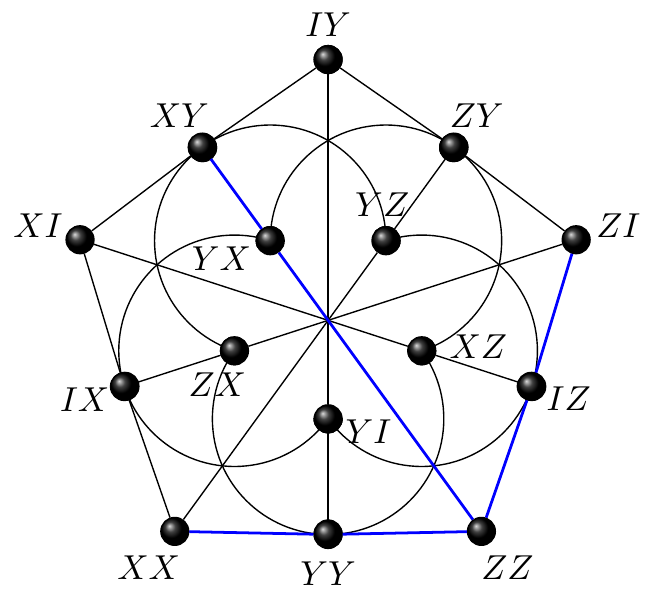}
 \caption{The perp-set (blue) defined by $ZZ$, which corresponds to the trace on $\mathcal{W}(3,2)$ of the Fano plane $F_{[1:0:1:0]}$ of PG$(3,2)$.}\label{zzperp}
\end{figure}
A fact of crucial importance for us is that perp-sets of $\mathcal{W}(3,2)$ are also geometric hyperplanes of the configuration (Definition \ref{hyperplane}).
\begin{definition}\label{hyperplane}
 Let $\mathcal{G}=(\mathcal{P},\mathcal{L},\mathcal{I})$ be a point-line incidence structure. A geometric hyperplane $H$ of $\mathcal{G}$ is a subset of $\mathcal{P}$ such that a line of $\mathcal{L}$ is either contained in $H$, or has just a single point in common with $H$.
\end{definition}

Consider now the hyperbolic quadric of $\mathcal{W}(3,2)$ defined as:
\begin{equation}\label{eq:quadric}
 \mathcal{Q}_0=\{p=[x_1:x_2:x_3:x_4]\in \mathcal{W}(3,2), x_1x_2+x_3x_4+x_1+x_2+x_3+x_4=0\}.
\end{equation}
It is the unique quadric of $\mathcal{W}(3,2)$ involving only non-trivial Pauli matrices. This quadric is illustrated in red in Figure \ref{quadric}. All quadrics of $\mathcal{W}(3,2)$ are also  geometric hyperplanes in  the sense of Definition  \ref{hyperplane}.  

\begin{figure}[!h] \centering
 \includegraphics[width=6cm]{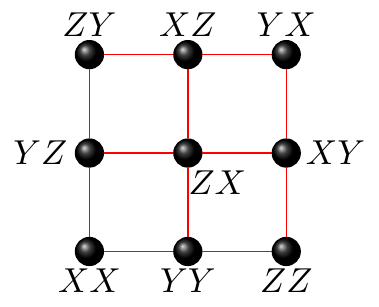}
 \includegraphics[width=6cm]{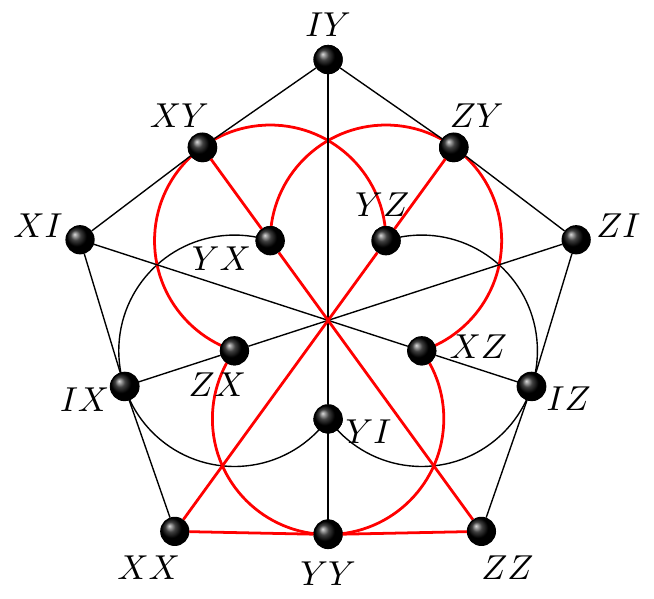}
 \caption{A distinguished hyperbolic quadric in $\mathcal{W}(3,2)$. Left: Each hyperbolic quadric in  $\mathcal{W}(3,2)$ has the point-line structure of a grid; the quadric in question involves only non-trivial Pauli matrices. Right: The same quadric (red) viewed as a geometric hyperplane of $\mathcal{W}(3,2)$.}\label{quadric}
\end{figure}

Definition  \ref{hyperplane} implies that if $H_1$ and $H_2$ are two distinct geometric hyperplanes, their intersection $H_1\cap H_2$ is a geometric hyperplane of the subgeometries defined by  $H_1$ and $H_2$. The quadric $\mathcal{Q}_0$ whose point-line structure can be pictured as a grid (Figure \ref{quadric}) has only two types of geometric hyperplanes, perp-sets and ovoids. Thus the intersections of the $15$ perp-sets of $\mathcal{W}(3,2)$ with $\mathcal{Q}_0$ will be of two different kinds: transverse intersections, corresponding to ovoids (Figure \ref{intersect}) or tangential intersections, corresponding to perp-sets  (Figure \ref{intersect2}).

\begin{figure}[!h] \centering
 \includegraphics[width=6cm]{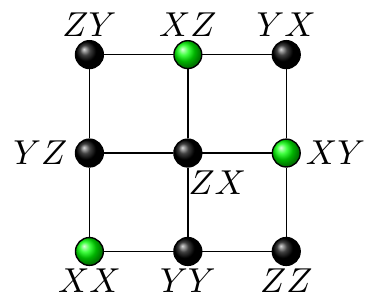}
 \includegraphics[width=6cm]{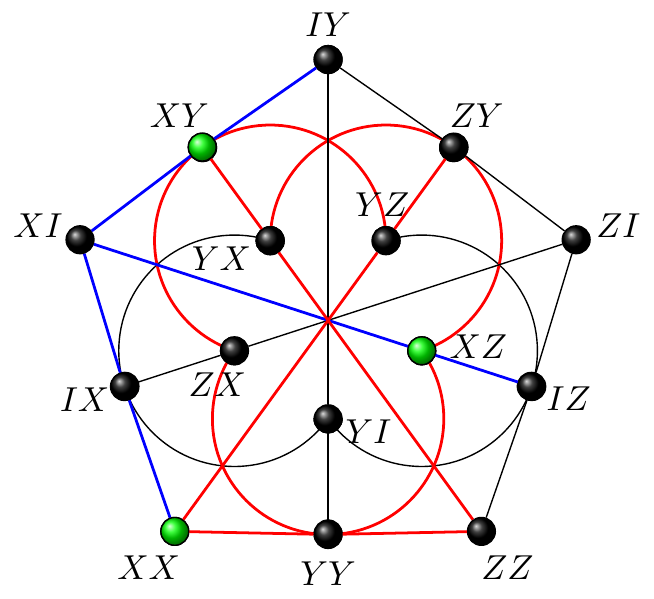}
 \caption{The perp-set $H_{XI}$ of $\mathcal{W}(3,2)$ intersect $\mathcal{Q}_0$ transversally. Left: $\mathcal{Q}_0\cap H_{XI}$ (green) is a geometric hyperplane called an ovoid (three points, no two on a line). Right: The same intersection portrayed in $\mathcal{W}(3,2)$.}\label{intersect}
\end{figure}
\begin{figure}[!h] \centering
 \includegraphics[width=6cm]{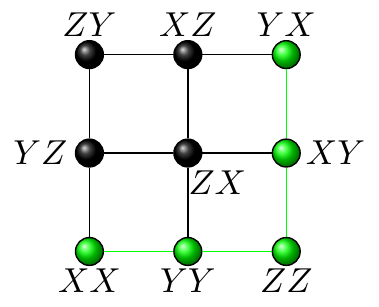}
 \includegraphics[width=6cm]{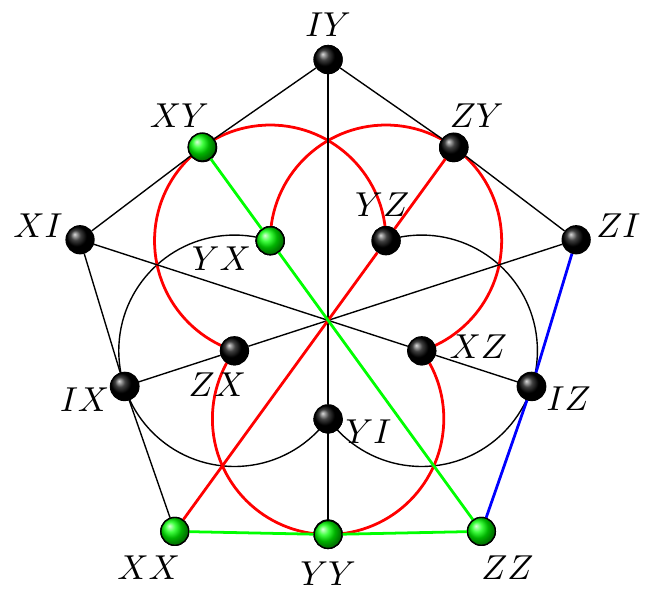}
 \caption{A tangential intersection of $\mathcal{Q}_0$ with a perp-set ($H_{ZZ}$) of $\mathcal{W}(3,2)$.}\label{intersect2}
\end{figure}

We thus come to a very important observation that furnishes a geometric interpretation of the $9+6$ splitting of the $X$-states described in Section \ref{entanglement}.
\begin{prop}
 The $9$ types of $X$-states of Group 2  correspond to the $9$ perp-sets of $\mathcal{W}(3,2)$ that intersect $\mathcal{Q}_0$ tangentially, whereas the $6$ $X$-states of Group 1 correspond to the $6$ perp-sets of $\mathcal{W}(3,2)$ that intersect $\mathcal{Q}_0$ transversally.
\end{prop}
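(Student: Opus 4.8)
The plan is to reduce the proposition to a single point-counting statement: the number of two-qubit \emph{correlation} coefficients (the $\beta$'s) attached to the $X$-state of type $p$ equals the cardinality $|H_p \cap \mathcal{Q}_0|$, and this cardinality is exactly what separates the two intersection patterns. To set up the dictionary between the algebra and the geometry, recall that each of the $15$ types is built via Eq. (\ref{fano}) from $I\otimes I$ together with the $7$ observables of the perp-set $H_p=\{q:\sigma(p,q)=0\}$ of a distinguished point $p$; a coefficient is a $\tau$ (local) when its observable has the form $A\otimes I$ or $I\otimes B$, and a $\beta$ (correlation) when it has the form $A\otimes B$ with $A,B\neq I$. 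The group membership is therefore decided by the split $7=(\#\tau)+(\#\beta)$, which is $(4,3)$ for Group 1 and $(2,5)$ for Group 2.

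The crucial identification is that $\mathcal{Q}_0$ is precisely the set of $9$ correlation observables. I would verify this by factoring the defining form over $\mathrm{GF}(2)$: writing $p=[x_1:x_2:x_3:x_4]$, the quadric equation rewrites as $(x_1+1)(x_2+1)=(x_3+1)(x_4+1)$, and since $(x_1+1)(x_2+1)=1$ holds exactly when the first tensor factor $Z^{x_1}X^{x_2}$ is trivial, the equation is satisfied (away from the zero vector) iff both tensor factors are nontrivial. Hence $p\in\mathcal{Q}_0$ iff $p$ is a correlation observable, its complement in $\mathcal{W}(3,2)$ being the $6$ local observables. Consequently $\#\beta=|H_p\cap\mathcal{Q}_0|$ and $\#\tau=|H_p\setminus\mathcal{Q}_0|$ for every type $p$.

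The geometric heart of the argument is to show that $|H_p\cap\mathcal{Q}_0|\in\{3,5\}$, with the value governed by whether $p\in\mathcal{Q}_0$. Since $\mathcal{Q}_0$ is a grid and both $H_p$ and $\mathcal{Q}_0$ are geometric hyperplanes, the intersection $H_p\cap\mathcal{Q}_0$ is a geometric hyperplane of the grid, hence either an ovoid ($3$ points, the transversal case) or a perp-set of the grid ($5$ points, the tangential case). To decide which, I would use that $\sigma$ is alternating, so $p\in H_p$, together with the rank-two property of $\mathcal{W}(3,2)$: the only totally isotropic lines contained in $H_p=p^{\perp}$ are the three through $p$, because a totally isotropic line in $p^{\perp}$ missing $p$ would span a totally isotropic plane, which is impossible in a rank-two polar space. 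As the generators of the grid $\mathcal{Q}_0$ are themselves totally isotropic lines lying in $\mathcal{Q}_0$, a grid line lies in $H_p$ iff it passes through $p$, iff $p\in\mathcal{Q}_0$. Thus $p\in\mathcal{Q}_0$ forces the two generators through $p$ into the intersection and yields the $5$-point perp-set (tangential), whereas $p\notin\mathcal{Q}_0$ leaves the intersection free of grid lines and yields the $3$-point ovoid (transversal).

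Finally I would assemble the count. A hyperbolic quadric of $\mathrm{PG}(3,2)$ has $9$ points, so there are $9$ points $p\in\mathcal{Q}_0$, each giving a tangential perp-set with $|H_p\cap\mathcal{Q}_0|=5$, hence $5$ $\beta$'s, i.e. precisely the $9$ Group 2 types; the remaining $15-9=6$ points lie off $\mathcal{Q}_0$ and give transversal perp-sets with $|H_p\cap\mathcal{Q}_0|=3$, hence $3$ $\beta$'s, i.e. the $6$ Group 1 types. The main obstacle, and the only step beyond bookkeeping, is the geometric lemma of the third paragraph; I expect the rank-two/totally-isotropic-plane argument to be the delicate point, although one could alternatively confirm the $\{3,5\}$ dichotomy by directly enumerating the $15$ perp-sets against the list in Appendix \ref{appA}.
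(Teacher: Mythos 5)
Your proof is correct, but it takes a genuinely different route from the paper: the paper's proof is a one-line, case-by-case verification --- one reads off from the table in Appendix \ref{appA} which of the $15$ perp-sets meet $\mathcal{Q}_0$ in $5$ points (tangentially) versus $3$ points (transversally) and matches this against the $\tau/\beta$ counts of the two groups. You replace this enumeration with a structural argument, and each of your steps checks out. The factorization of the form in Eq. (\ref{eq:quadric}) as $(x_1+1)(x_2+1)=(x_3+1)(x_4+1)$ over $\mathrm{GF}(2)$ is correct and does identify $\mathcal{Q}_0$ with the $9$ correlation observables (this is the precise content of the paper's remark that $\mathcal{Q}_0$ is the unique quadric involving only non-trivial Pauli matrices), so $\#\beta=|H_p\cap\mathcal{Q}_0|$ holds by definition of the labeling. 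Your key lemma --- that a totally isotropic line contained in $p^{\perp}$ must pass through $p$, since otherwise $\langle \ell,p\rangle$ would be a totally isotropic plane, impossible in the rank-two space $\mathcal{W}(3,2)$ (note $\sigma(p,p)=0$ as $\sigma$ is alternating, so the span is indeed totally isotropic) --- is sound, and combined with the fact that the grid's only geometric hyperplanes are $5$-point perp-sets (which contain two full grid lines) and $3$-point ovoids (which contain none), it correctly forces the dichotomy: $p\in\mathcal{Q}_0$ gives the two generators through $p$ inside $H_p\cap\mathcal{Q}_0$, hence the tangential $5$-point case, while $p\notin\mathcal{Q}_0$ excludes all grid lines, hence the transversal $3$-point case. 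The final count ($9$ points on a hyperbolic quadric of $\mathrm{PG}(3,2)$, $6$ off it) then reproduces the $9+6$ splitting. What the paper's approach buys is immediacy and direct verifiability against Table \ref{tab:fano}; what yours buys is an explanation of \emph{why} the splitting occurs --- group membership is literally the incidence $p\in\mathcal{Q}_0$ --- together with an argument that would survive in larger polar spaces where brute-force enumeration becomes impractical. The one point worth tightening is the phrase ``would span a totally isotropic plane'': make explicit that the plane in question is $\langle \ell, p\rangle$ and that total isotropy uses both $\ell\subset p^{\perp}$ and the alternating property of $\sigma$; as written it could be misread as the line alone spanning the plane.
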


\proof The result follows by considering the $15$ types of $X$-states given in Appendix \ref{appA}. $\Box$


\section{Hyperplane-states in the two-qubit Pauli group and states with maximally-mixed subsystems}
In the previous section we have shown how the $15$ distinct sets of two-qubit Pauli observables that generate two-qubit $X$-states correspond to perp-sets, a specific class of geometric hyperplanes, in $\mathcal{W}(3,2)$ whose points are parametrized by observables as shown in Figure \ref{thedoily}. This leads naturally to considering other geometric hyperplanes of $\mathcal{W}(3,2)$ and the sets of observables lying on them for generating  density matrices.  Let us make this idea more precise with the following definition:
\begin{definition}
 Let $\rho$ be a two-qubit density matrix. One says that $\rho$ is a two-qubit hyperplane-state iff the set of two-qubit observables defining $\rho$ is a geometric hyperplane of $\mathcal{W}(3,2)$.
\end{definition}

\begin{example}
 $X$-states are thus a special type of hyperplane-states as they correspond to perp-sets. Let us now consider a density matrix of the form given by Eq. (\ref{mixed}).
 \begin{equation}\label{mixed}
  \rho=\frac{1}{4}(I_4+\beta_{xx}XX+\beta_{xy}XY+\beta_{xz}XZ+\beta_{yx}YX+\beta_{yy}YY+\beta_{yz}YZ+\beta_{xz}XZ+\beta_{yz}YZ+\beta_{zz}ZZ).
 \end{equation}
It corresponds to the generic form of  a maximally-mixed two-qubit state. In terms of hyperplane-state description, this density matrix  is generated by the quadric $\mathcal{Q}_0$.
\end{example}

At this point we need to introduce some more details about geometric hyperplanes of $\mathcal{W}(3,2)$ \cite{Saniga2007}. There are altogether three kinds of them, namely:

\begin{itemize}
 \item $15$ perp-sets $H_p$, defined for each point $p\in \mathcal{W}(3,2)$ as $H_p=\{q \in \mathcal{W}(3,2), \sigma(p,q)=0\}$. We have met several examples of them; they  correspond to the $15$ types of $X$-states.
 \item $10$ grids, or Mermin-hyperplanes. Each of them comprise $9$ points and $6$ lines, with three points per line and two lines through a point. The quadric $\mathcal{Q}_0$ given by Eq. (\ref{eq:quadric}) and portrayed in Figure \ref{quadric} serves as an illustrative  example. If we take this grid embedded in $\mathcal{W}(3,2)$ and rotate it by $2\pi/5$ degrees around the center of the figure, one obtains $4$ more grids  ($\mathcal{Q}_1,\dots,\mathcal{Q}_4$). Figure \ref{fig:merminovoid} shows the second form of a grid embedded in $\mathcal{W}(3,2)$, referred to as $\mathcal{Q}_5$. Performing the same rotation as in the previous case yields the four remaining grids $\mathcal{Q}_6,\dots,\mathcal{Q}_9$. One may call such hyperplanes Mermin-hyperplanes, as each of them furnishes an observable-based proof (a Mermin-Peres `magic' square) of the famous Kochen-Specker theorem (see \cite{mermin1993hidden} and \cite{peres1991two} for the original argument and \cite{Saniga2007,planat2007pauli,holweck2017contextuality} for the discussions of the geometrical contexts).
 \item $6$ ovoids. An ovoid of $\mathcal{W}(3,2)$ is a set of five points, no two of them being collinear. Figure \ref{fig:merminovoid} depicts two ovoids in $\mathcal{W}(3,2)$. The first one, $\mathcal{O}_1$, is rotationally invariant. Rotating the second one, $\mathcal{O}_2$, by $2\pi/5$ degrees one obtains the remaining $4$ ovoids $\mathcal{O}_i$, $i=3,\dots,6$. Like Mermin-hyperplanes, ovoids also underlie a certain family of quantum contextual configurations (namely three-qubit Mermin pentagrams) using the Klein correspondence \cite{saniga2012mermin}.
\end{itemize}

\begin{figure}[!h] \centering
\begin{center}
 \includegraphics[width=5cm]{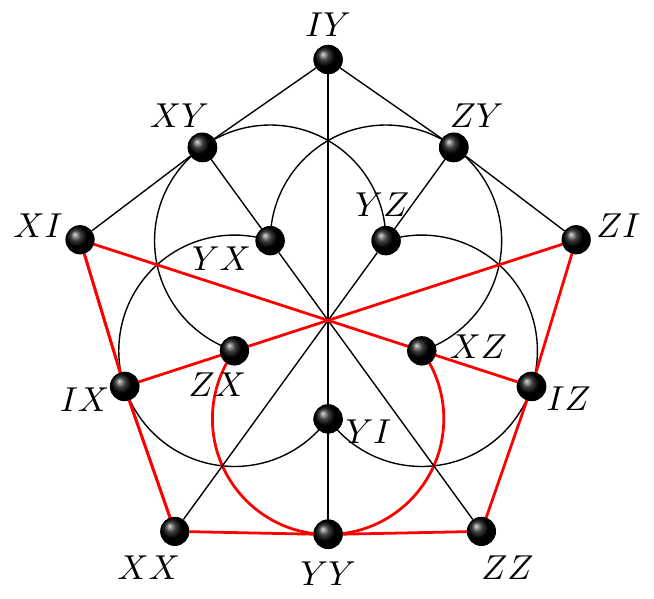}
 \includegraphics[width=5cm]{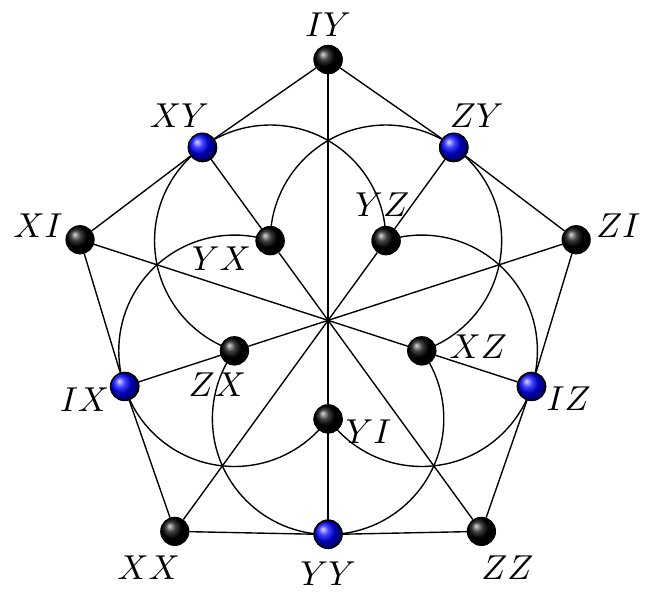}
 \includegraphics[width=5cm]{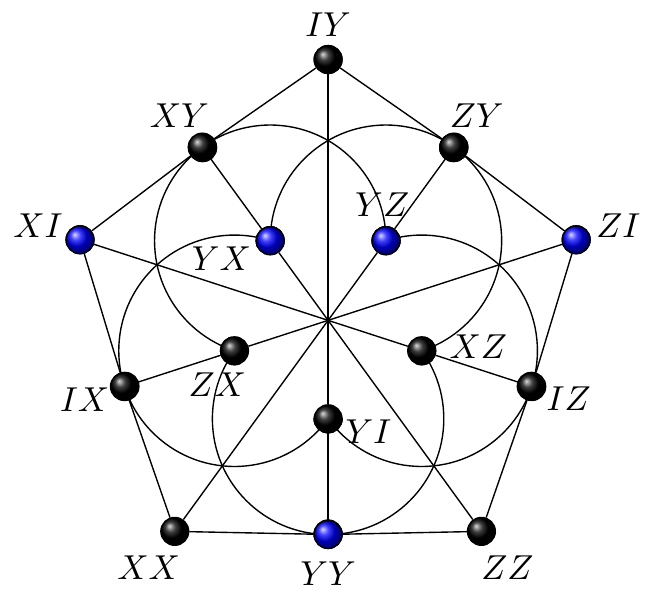}
 \caption{Left: The Mermin-hyperplane denoted as $\mathcal{Q}_5$. Its rotation by $2\pi/5$ degrees around the center of the Doily yields four more grids. Middle and Right: The ovoids of $\mathcal{W}(3,2)$. Rotation of the second one by $2\pi/5$ degrees gives four more ovoids.}\label{fig:merminovoid}
 \end{center}
\end{figure}
To each of the $31$ geometric hyperplanes listed above one can associate a specific type of two-qubit density matrix. For instance, if one consider the hyperplanes $\mathcal{Q}_5$ and $\mathcal{O}_1$, one gets the following types of quantum states
\begin{equation}
 \rho_{\mathcal{Q}_5}=\frac{1}{4}(I_4+\tau_{x}^A XI+\tau^A _z ZI+\tau^B _x IX+\tau^B _z IZ+\beta_{xx}XX+\beta_{YY}YY+\beta_{zz}ZZ+\beta_{zx}ZX+\beta_{xz}XZ)
\end{equation}
and 
\begin{equation}
 \rho_{\mathcal{O}_1}=\frac{1}{4}(I_4+\tau^B _x IX+\tau^B _z IZ+\beta_{xy} XY+\beta_{zy} ZY+\beta_{yy}YY),
\end{equation}
respectively.

The $10$ Mermin-hyperplane states depend generally on $9$ parameters, whereas the $6$ ovoid-states on $5$ parameters. However, if we restrict to states with maximally-mixed subsystems, i.e. states such that their partially reduced states are maximally mixed, $\rho_A=\frac{1}{2}I_2$ and $\rho_B=\frac{1}{2}I_2$, or, in our context, two-qubit density matrices such that the $\tau$ coefficients are zero, we arrive at the following result:
\begin{prop}
 Let us consider the $16$-member family of hyperplane-states that is the union of $10$ Mermin-hyperplane-states and $6$ ovoid-states. Then
 \begin{itemize}
  \item The $\mathcal{Q}_0$-hyperplane states correspond to general two-qubit  states with maximally-mixed subsystems.
  \item The $9$ $\mathcal{Q}_i$-hyperplane states  correspond to  $X$-states of Group 2 with maximally-mixed subsystems.
  \item The $6$ $\mathcal{O}_i$-hyperplane states correspond to  $X$-states of Group 1 with maximally-mixed subsystems.
 \end{itemize}
\end{prop}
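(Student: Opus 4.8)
The plan is to reduce everything to the combinatorics of how each hyperplane meets the distinguished quadric $\mathcal{Q}_0$. The key observation is that imposing maximally-mixed subsystems means setting all $\tau$-coefficients to zero, which discards exactly the six observables carrying an identity tensor factor ($AI$ and $IA$); and a direct check of the quadratic form in Eq. (\ref{eq:quadric}) shows that these six are precisely the points of $\mathcal{W}(3,2)$ \emph{not} lying on $\mathcal{Q}_0$, so that $\mathcal{Q}_0$ is exactly the set of nine genuine two-qubit observables. Consequently, for any geometric hyperplane $H$ of $\mathcal{W}(3,2)$ the associated hyperplane-state with maximally-mixed subsystems is governed entirely by $H\cap\mathcal{Q}_0$. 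This turns the proposition into the problem of identifying the combinatorial type of $H\cap\mathcal{Q}_0$ inside the grid $\mathcal{Q}_0$ for each of the sixteen hyperplanes in the family.

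Next I would record the two hyperplane types of the $3\times3$ grid $\mathcal{Q}_0$ and match them to the two groups. As already noted, the grid admits only perp-sets (a full row together with a full column: five points arranging as a $2\times2$ block of observables plus the single corner observable sharing no factor with the block) and ovoids (the transversals: three points, one per row and per column, no two collinear). A short inspection shows that the five-point perp-sets reproduce exactly the $\beta_0$-plus-$M$ pattern of Eq. (\ref{G2_gen_params}), i.e. a Group 2 X-state with maximally-mixed subsystems, while the six transversals are precisely the six triples of genuine observables sharing a common Pauli factor in a fixed slot, i.e. the $\beta$-pattern of a Group 1 X-state. The first bullet is then immediate, since $\mathcal{Q}_0\cap\mathcal{Q}_0=\mathcal{Q}_0$ retains all nine $\beta$-observables and yields the generic maximally-mixed state of Eq. (\ref{mixed}).

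For the six ovoids $\mathcal{O}_i$ I would argue that $\mathcal{O}_i\cap\mathcal{Q}_0$ is a geometric hyperplane of the grid $\mathcal{Q}_0$ (the intersection of two distinct hyperplanes being a hyperplane of each subgeometry) that contains no two collinear points, this property being inherited from $\mathcal{O}_i$. A perp-set of the grid contains whole lines, so the intersection cannot be a perp-set; it must be a transversal, hence a Group 1 pattern. For the nine grids $\mathcal{Q}_i$ with $i\geq1$ the intersection $\mathcal{Q}_i\cap\mathcal{Q}_0$ is again a proper grid-hyperplane, so it is a five-point perp-set or a three-point transversal. To exclude the transversal I would use inclusion-exclusion: a three-point intersection would force $\mathcal{Q}_i\cup\mathcal{Q}_0$ to exhaust all fifteen points, so $\mathcal{Q}_i$ would have to contain all six local observables. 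But no grid can do this: every line of a grid is a commuting triple and hence contains at least one genuine observable (the local observables form a triangle-free $K_{3,3}$ commuting graph), which would force all six lines to have the shape $\{P\otimes I,\,I\otimes Q,\,P\otimes Q\}$; each genuine point $P\otimes Q$ lies on at most one such line, contradicting the fact that every grid point lies on exactly two lines. Hence $\mathcal{Q}_i\cap\mathcal{Q}_0$ is a five-point perp-set, i.e. a Group 2 pattern.

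Finally I would match cardinalities: the grid has exactly nine perp-sets and six transversals, matching the nine grids $\mathcal{Q}_1,\dots,\mathcal{Q}_9$ and the six ovoids $\mathcal{O}_1,\dots,\mathcal{O}_6$; verifying that these assignments are bijective (most cleanly against an explicit list of the ten grids and six ovoids, in the spirit of Appendix \ref{appA}, or via the five-fold rotational symmetry of the Doily) completes the correspondence. The main obstacle I anticipate is precisely the grid case, namely proving that a grid meets $\mathcal{Q}_0$ in a perp-set rather than a transversal; the cleanest route is the ``no grid contains all six local observables'' argument above, the alternative being a routine but tedious enumeration of pairwise intersections.
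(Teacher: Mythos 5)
Your argument is correct, and it proves the proposition by a genuinely different route than the paper, whose entire proof is the one-line ``a case by case argument leads to the result'' --- i.e.\ an explicit enumeration of the ten grids and six ovoids against the fifteen $X$-state types of Appendix~\ref{appA}. You replace that enumeration with a structural argument: first the (correct, and easily verified from Eq.~(\ref{eq:quadric})) observation that $\mathcal{Q}_0$ consists exactly of the nine observables with both tensor factors non-trivial, so that passing to maximally-mixed subsystems amounts to replacing each hyperplane $H$ by $H\cap\mathcal{Q}_0$; then the paper's own lemma that the intersection of two distinct geometric hyperplanes is a hyperplane of each subgeometry, together with the stated classification of grid hyperplanes into five-point perp-sets (which you correctly match to the $\beta_0$-plus-$M$ pattern of Eq.~(\ref{G2_gen_params})) and three-point transversals (the six common-factor triples, i.e.\ the Group~1 pattern); and finally, for the only non-trivial case, a clean counting argument ruling out a three-point intersection $\mathcal{Q}_i\cap\mathcal{Q}_0$: such an intersection would force $\mathcal{Q}_i$ to contain all six local observables, and your incidence count (six lines each needing a genuine point, three genuine points each on exactly two lines, every two-local line necessarily of the form $\{AI, IB, AB\}$, with only one such line per genuine point) correctly derives a contradiction. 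What your approach buys is conceptual content and robustness --- it extends the paper's Section~3 tangential/transversal dichotomy for perp-sets to the remaining hyperplane types and would generalize to settings where brute enumeration is impractical --- whereas the paper's case check, given the explicit lists already in hand, is shorter and delivers the explicit bijection between the nine $\mathcal{Q}_i$ and the nine Group~2 types for free; note that your own bijectivity step still falls back on exactly such an enumeration (or the $2\pi/5$ rotational symmetry), so the two approaches converge at that final point. One cosmetic caveat: your description of a grid perp-set as simultaneously ``a full row together with a full column'' and ``a $2\times2$ block plus a corner'' conflates two different coordinatizations (the grid's line structure versus the $3\times3$ array indexed by Pauli letters); the two descriptions do denote the same five-point set, but it would be worth saying so explicitly.
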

\proof A case by case argument leads to the result. $\Box$
\section{Non-locality of hyperplane-states}
To address non-local properties of hyperplane-states, we will borrow the following theorem from Horodecki et. al. \cite{horodecki1995violating}:
\begin{theorem}[Horodecki R., Horodecki P., Horodecki M. \cite{horodecki1995violating}]
A density matrix $\rho$ describes a state that violates the Bell inequality iff $\mathcal{M}>1$, where $\mathcal{M}=u + \bar{u}$, the sum of the two largest eigenvalues of the matrix $\beta^{T}\beta$.
\end{theorem}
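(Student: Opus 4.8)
The plan is to evaluate the largest possible expectation value of the CHSH Bell operator over all measurement settings for a fixed $\rho$, and to show that this maximum equals $2\sqrt{\mathcal M}$; violation of the classical bound $2$ is then equivalent to $\mathcal M>1$. First I would write $\rho$ in its correlation form
\[
\rho=\frac14\Big(I\otimes I+\vec r\cdot\vec\sigma\otimes I+I\otimes\vec s\cdot\vec\sigma+\sum_{i,j}\beta_{ij}\,\sigma_i\otimes\sigma_j\Big),
\]
so that for unit vectors $\vec a,\vec b\in\RR^3$ the correlation expectation is $\langle\vec a\cdot\vec\sigma\otimes\vec b\cdot\vec\sigma\rangle_\rho=\vec a^{\,T}\beta\,\vec b$, with $\beta=(\beta_{ij})$ the $3\times3$ correlation matrix of the theorem. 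The CHSH operator is $\mathcal B=\vec a\cdot\vec\sigma\otimes(\vec b+\vec b')\cdot\vec\sigma+\vec a'\cdot\vec\sigma\otimes(\vec b-\vec b')\cdot\vec\sigma$, and a state violates the Bell inequality exactly when $\max|\langle\mathcal B\rangle_\rho|>2$, the maximum ranging over unit vectors $\vec a,\vec a',\vec b,\vec b'$.

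The geometric observation that drives the computation is that $\vec b,\vec b'$ being unit vectors forces $\vec b+\vec b'$ and $\vec b-\vec b'$ to be orthogonal. Writing $\vec b+\vec b'=2\cos\theta\,\vec c$ and $\vec b-\vec b'=2\sin\theta\,\vec c'$ with $\vec c,\vec c'$ orthonormal, one gets $\langle\mathcal B\rangle_\rho=2\cos\theta\,\vec a^{\,T}\beta\,\vec c+2\sin\theta\,\vec a'^{\,T}\beta\,\vec c'$. I would then optimize in stages. Maximizing over $\vec a,\vec a'$ by Cauchy--Schwarz gives the optimal alignments $\vec a\parallel\beta\vec c$ and $\vec a'\parallel\beta\vec c'$, leaving $2\cos\theta\,|\beta\vec c|+2\sin\theta\,|\beta\vec c'|$; maximizing the latter over $\theta$ yields
\[
2\sqrt{|\beta\vec c|^2+|\beta\vec c'|^2}=2\sqrt{\vec c^{\,T}\beta^T\beta\,\vec c+\vec c'^{\,T}\beta^T\beta\,\vec c'}.
\]

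The final and most delicate step is to maximize $\vec c^{\,T}\beta^T\beta\,\vec c+\vec c'^{\,T}\beta^T\beta\,\vec c'$ over orthonormal pairs $(\vec c,\vec c')$. Since $\beta^T\beta$ is real, symmetric and positive-semidefinite, the Ky Fan (equivalently Rayleigh--Ritz for sums of eigenvalues) variational principle identifies this maximum with the sum of the two largest eigenvalues $u+\bar u=\mathcal M$, attained when $\vec c,\vec c'$ span the corresponding eigenspace. Assembling the stages yields $\max|\langle\mathcal B\rangle_\rho|=2\sqrt{\mathcal M}$, so the classical threshold $2$ is exceeded if and only if $\mathcal M>1$. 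The main obstacle is precisely this extremal characterization of the eigenvalue sum over orthonormal pairs: it is where the quantity $\mathcal M=u+\bar u$ is pinned down, and it requires the full variational principle rather than a single Rayleigh-quotient argument, whereas the preceding reductions are elementary optimizations.
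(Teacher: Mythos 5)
Your proof is correct and is essentially the original Horodecki--Horodecki--Horodecki argument: the staged optimization (orthogonality of $\vec b\pm\vec b'$, Cauchy--Schwarz over $\vec a,\vec a'$, then the angle $\theta$) giving $\max|\langle\mathcal{B}_{\mathrm{CHSH}}\rangle_\rho|=2\sqrt{\mathcal{M}}$, with the Ky Fan variational principle identifying $\mathcal{M}$ as the sum of the two largest eigenvalues of $\beta^{T}\beta$. Note that the paper itself offers no proof of this statement---it is imported verbatim from \cite{horodecki1995violating}---so your derivation matches the proof in the cited source rather than anything in the paper.
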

For both Group 1 $X$-states and ovoid-states, their consistent separability renders them locally realistic for all valid choices of parameters.
Concerning Group 2 and Mermin-states $\mathcal{Q}_1,\dots,\mathcal{Q}_9$, we will examine those that share the same $\beta$ matrices. (As already mentioned, the remaining Mermin-state, $\mathcal{Q}_{0}$, corresponds to the most general $2$-qubit states with maximally-mixed subsystems).
The eigenvalues $u_{i}$ of $\beta^{T}\beta$ can be expressed in terms of generalised parameters \eqref{G2_gen_params} as follows
\begin{equation}
    u_{i} = \{\beta_{0}^{2}, \frac{1}{2}(B(M) \pm U(M))\},
\end{equation}
where $B(M):=\text{Tr}M^{T}M$ and $U:=\sqrt{B^{2}-4(\text{det}M)^{2}}$. The two eigenvalues $\frac{1}{2}(B \pm U)$ are the eigenvalues of the matrix $M^{T}M$, which we denote by $m_1, m_2$ with $m_1 = \frac{1}{2}(B-U)$, and $m_2 = \frac{1}{2}(B+U)$.\\
\\
One has then in terms of $B$ and $U$ that
\begin{equation}\label{eq:nonlocality}
    \mathcal{M} = \left\{\begin{array}{cc}\vspace{.2cm}
    B, & \beta_{0}^{2} < m_{1}, \\
    \beta_{0}^{2} + \frac{1}{2}(B+U), & \beta_{0}^{2} \geq m_{1}.
    \end{array}\right.
\end{equation}
This leads to the following proposition:
\begin{prop}
The set of (not necessarily valid) Group 2 X-states or Mermin-states $\{\rho\}$ with given $\beta_{0},\beta_{3},\beta_{4}$ and constant $\mathcal{M}$ is given by a connected subset of points on the union of a circle and ellipse centered at the origin, consisting of arcs of the circle within the ellipse, and arcs of the ellipse within the circle (see Fig. \ref{fig:constant_nonlocality}).
\end{prop}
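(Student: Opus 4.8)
The plan is to fix $\beta_0,\beta_3,\beta_4$ and regard $(\beta_1,\beta_2)$ as Cartesian coordinates in $\mathbf{R}^2$, so that the level set $\{\mathcal{M}=\mathcal{M}_0\}$ becomes a curve in this plane. Writing $c:=\beta_3^2+\beta_4^2$ (a constant) and using $B=\beta_1^2+\beta_2^2+c$, the first branch of \eqref{eq:nonlocality} reads $\mathcal{M}=B=\mathcal{M}_0$, which is simply $\beta_1^2+\beta_2^2=\mathcal{M}_0-c$: a circle centered at the origin. The second branch reads $\mathcal{M}=\beta_0^2+m_2=\mathcal{M}_0$, i.e. $m_2=k$ with $k:=\mathcal{M}_0-\beta_0^2$, so it remains only to show that a level set of the top eigenvalue $m_2$ of $M^TM$ is an ellipse centered at the origin.

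For the ellipse I would use the two symmetric functions $m_1+m_2=B$ and $m_1 m_2=(\det M)^2$. Fixing $m_2=k$ forces $m_1=B-k$, hence $k(B-k)=(\det M)^2$. Substituting $B=\beta_1^2+\beta_2^2+c$ and $\det M=\beta_1\beta_4-\beta_2\beta_3$ and expanding gives
\[
(k-\beta_4^2)\beta_1^2+2\beta_3\beta_4\,\beta_1\beta_2+(k-\beta_3^2)\beta_2^2=k(k-c),
\]
a conic with no linear terms, hence centered at the origin. Its quadratic form has matrix determinant $(k-\beta_4^2)(k-\beta_3^2)-\beta_3^2\beta_4^2=k(k-c)$ and trace $2k-c$. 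The key point is that $k=m_2$ is the largest eigenvalue of $MM^T$, which dominates the diagonal entry $(MM^T)_{22}=\beta_3^2+\beta_4^2=c$; thus $k\ge c$, the quadratic form is positive definite (positive trace and determinant), and the level set is a genuine origin-centered ellipse whenever $k>c$.

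Next I would identify which arcs of these two conics actually constitute the locus, by translating the case split $\beta_0^2\lessgtr m_1$ of \eqref{eq:nonlocality}. On the circle $B=\mathcal{M}_0$ one has $m_1=\mathcal{M}_0-m_2$, so the circle-branch condition $\beta_0^2<m_1$ is equivalent to $m_2<k$, i.e. to lying strictly inside the ellipse. On the ellipse $m_2=k$ one has $m_1=B-k$, so the ellipse-branch condition $\beta_0^2\ge m_1$ is equivalent to $B\le\mathcal{M}_0$, i.e. to lying inside the closed disc bounded by the circle. Exactly on the shared boundary $\beta_0^2=m_1$ both conditions hold simultaneously, forcing $B=\mathcal{M}_0$ and $m_2=k$ at once; these are precisely the intersection points of the circle and the ellipse, so the two families of arcs meet continuously there.

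Finally, for connectivity I would invoke convexity. Let $D_1$ be the closed disc bounded by the circle and $D_2$ the closed region bounded by the ellipse; both are convex, so $D_1\cap D_2$ is convex and its topological boundary is a single connected closed curve. By the previous paragraph that boundary is exactly $(\partial D_1\cap D_2)\cup(\partial D_2\cap D_1)$, i.e. the arcs of the circle lying inside the ellipse together with the arcs of the ellipse lying inside the circle, which is precisely the claimed locus. The main obstacle I anticipate is the ellipse step: one must verify that the constant-$m_2$ conic is genuinely a non-degenerate, origin-centered ellipse, and it is the inequality $m_2\ge c$, together with its clean matching to the \emph{inside/outside} dichotomy, that makes the piecewise description assemble into one connected curve.
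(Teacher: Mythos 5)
Your proof is correct and lands on the same circle--ellipse picture, but it gets there by a genuinely different route in two places. For the ellipse, the paper first rotates the generalised coordinates so that $\beta_3=0$ and reads the curve off the characteristic polynomial of $M^TM$ in the standard form $\beta_1^2/m+\beta_2^2/(m-\beta_4^2)=1$, checking $m_2-\beta_4^2\geq 0$ by a contradiction argument; you instead stay in unrotated coordinates, eliminate $m_1$ via the symmetric functions $m_1+m_2=B$ and $m_1m_2=(\det M)^2$, and obtain the conic $(k-\beta_4^2)\beta_1^2+2\beta_3\beta_4\,\beta_1\beta_2+(k-\beta_3^2)\beta_2^2=k(k-c)$ together with positive definiteness from the Rayleigh bound $m_2\geq (MM^T)_{22}=c$. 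The paper's rotation buys explicit semi-axes and foci at $\pm C$ for free; your version buys generality and a cleaner definiteness criterion. The bigger difference is the arc/connectivity step: the paper argues pointwise that the level families of $B$ and of $\beta_0^2+m_2$ grow monotonically, so a point of $\{\mathcal{M}=k\}$ lies on one curve and inside the other, but it never actually proves the locus is connected. Your observation that $\mathcal{M}=\max\{B,\beta_0^2+m_2\}$ has sublevel set $D_1\cap D_2$, an intersection of two convex regions, so the level set is the boundary of a convex set and hence a single closed connected curve, is sharper than what the paper offers and is the cleanest way to justify the word ``connected'' in the statement.

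One loose end you should close: the equation $k(B-k)=(\det M)^2$ characterises ``$k$ is \emph{an} eigenvalue of $M^TM$,'' not ``$m_2=k$,'' so as written you have only shown $\{m_2=k\}$ is \emph{contained} in your conic, yet you later treat every point of the ellipse as satisfying $m_2=k$ (e.g.\ when translating the branch condition to $B\leq\mathcal{M}_0$). The converse inclusion does hold, and your own computation supplies the patch: the quadratic form has eigenvalues $k$ and $k-c$, so on the conic $\beta_1^2+\beta_2^2\leq k(k-c)/(k-c)=k$, whence $B\leq k+c\leq 2k$ and the other root $B-k\leq k$, i.e.\ $m_2=k$ everywhere on the ellipse. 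Add that one line and the argument is airtight.
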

\begin{proof}
Without loss of generality, one can choose generalised coordinates rotated such that $\beta_{3}=0$ and the points $C, -C$ lie on the $x$-axis.\\
As $m_1, m_2$ are eigenvalues of the matrix $M^{T}M$, one can write the characteristic polynomial of this matrix in generalised coordinates:
\begin{equation*}
    f(\lambda) = \lambda^{2} - \lambda \text{Tr}M^{T}M + \text{det}M^{T}M
\end{equation*}
\begin{equation*}
    = \lambda^{2} - \lambda (\beta_{1}^{2} + \beta_{2}^{2} + \beta_{4}^{2}) + \beta_{1}^{2}\beta_{4}^{2}.
\end{equation*}
Taking an eigenvalue $m \in \{m_1, m_2\}$ as input, this results in 
\begin{equation*}
    0 = m^{2} - m (\beta_{1}^{2} + \beta_{2}^{2} + \beta_{4}^{2}) + \beta_{1}^{2}\beta_{4}^{2},
\end{equation*}
\begin{equation}\label{eq:ellipse}
    \Rightarrow \frac{\beta_{1}^{2}}{m} + \frac{\beta_{2}^{2}}{m - \beta_{4}^{2}} = 1.
\end{equation}
 For the denominators in \eqref{eq:ellipse} constant one obtains the equations for two conic sections, one for each $m_i$, $i=1,2$. For the denominators both positive, one has the equation for an ellipse, and for the first denominator positive and the second negative, one has the equation for a hyperbola. As $m_i$ are eigenvalues of the square matrix $M^{T}M$, they are necessarily non-negative, so it suffices to check the sign of $m_i - \beta_{4}^{2}$ for $m_i \neq 0$.\\
 First, one must check under what conditions one can retrieve constant denominator terms. For $\mathcal{M}=k$ constant, one has that $\text{max}\{m_1 + m_2, \beta_{0}^{2} +m_2\}$ is constant. Taking each argument individually, $m_1 + m_2 = B = k$ gives (recalling $B = \sum_{i=1}^{4} \beta_{i}^{2}$):
 \begin{equation*}
     \beta_{1}^{2} + \beta_{2}^{2} = k - (\beta_{3}^{2}+\beta_{4}^{2}),
 \end{equation*}
 i.e. the equation of a circle in coordinates with varying $\beta_{1},\beta_{2}$ and given $\beta_{3},\beta_{4}$, provided $k > \beta_{4}^{2} + \beta_{3}^{2}$. For $k< \beta_{3}^{2}+\beta_{4}^{2}$, no such solutions exist, and $\mathcal{M}$ can only be given by $\beta_{0}^{2} +m_2$. One has from proposition \ref{prop:regions_sep_ent} that a state can be valid only if $\beta_{3}^{2} + \beta_{4}^{2} \leq 1$, so a minimally valid system ($\beta_{3}^{2}+\beta_{4}^{2}=1$) allows only $B \geq 1$.\\
 \\
 Taking the other argument, $\beta_{0}^{2} + m_2$ constant for given $\beta_{0}$ indicates that $m_2$ is constant, which provides constant denominators in \eqref{eq:ellipse}. One can recover that $m_2 - \beta_{4}^{2} \geq 0$ by assuming the negation, and substituting the $\beta_{i}$ terms into $m_2$, with $\beta_{3}=0$. One then obtains $(\beta_{2}\beta_{4})^{2} < 0$, a contradiction. This then provides via \eqref{eq:ellipse} the equation for an ellipse, with semi-major and semi-minor axes given by $a=\sqrt{m_2}$ and $b=\sqrt{m_2-\beta_{4}^{2}}$, respectively, and foci at $\pm C = (\pm \beta_{4},0)$.\\
 (Note that the term $m_1-\beta_{4}^{2}$ can be similarly examined and found to be non-positive, however the constancy conditions on $\mathcal{M}$ do not give a constant $m_1$ term so the hyperbola given by substituting $m_1$ into \eqref{eq:ellipse} does not arise in this examination).\\
 \\
 The curves of constant $m_1+m_2$, $\beta_{0}^{2}+m_2$ are then given by a circle and an ellipse, respectively (see Fig. \ref{fig:constant_nonlocality}). The curve of constant $\mathcal{M} = \text{max}\{m_1+m_2, \beta_{0}^{2}+m_2\}$ is given by the sections of these curves that lie within the alternate conic section, i.e. the arcs of the circle that lie within the ellipse, and the arcs of the ellipse that lie within the circle. To see this, consider a point $p$ on the curve $\mathcal{M}=k$. $p$ must then be on either the circle or the ellipse, whichever has maximal value at that point. This requires that the \textit{other} curve has value less than $k$ at the point $p$. Curves of constant $B$ are given by circles centred at the origin, with increasing radii for increasing constant value. The same is true for curves of constant $\beta_{0}^{2}+m_2$, i.e. ellipses centred at the origin with increasing semi-major and semi-minor axes for increasing constant value. Thus, for one curve to have value $k$ and one curve to have value less than $k$ at the point $p$, we require that $p$ is contained on one curve and \textit{inside} the other, for any $p$ on the curve $\mathcal{M}=k$. Thus, it is given by arcs of the circle contained within the ellipse, and vice-versa.\\
\begin{figure}
    \centering
    \includegraphics[width=12cm]{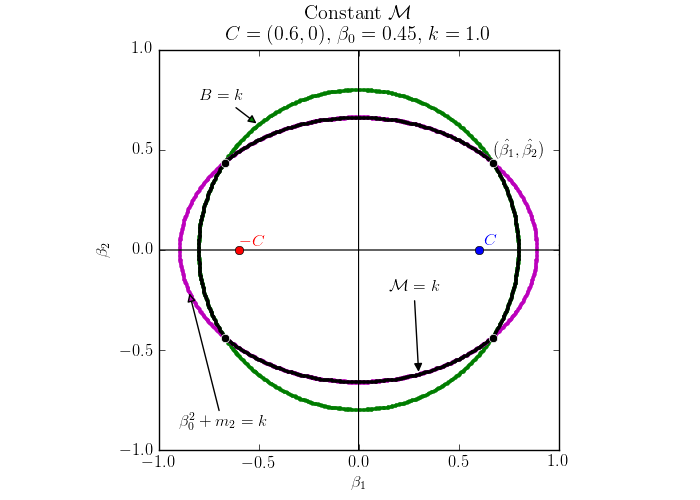}
    \caption{Curves of constant $\mathcal{M}=k$ for $k=1$, $C = (\beta_{4},\beta_{3}) = (0.6,0)$, $\beta_{0} = 0.45$. The curve of constant $B = 1$ is given by the green circle with radius $r_{B} = \sqrt{1 - \beta_{4}^{2}}=0.8$. The curve of constant $\beta_{0}^{2} + m_2 = 1$ is given by the pink ellipse with focii at $\pm C$ and semi-major and semi-minor axes given by (respectively) $a = \sqrt{m_2}\approx 0.893$, $b = \sqrt{m_2 - \beta_{4}^{2}}\approx 0.661$. The curve of constant $\mathcal{M}=1$ is given by the union of the arcs of the circle contained within the ellipse, and the arcs of the ellipse contained within the circle (black curve). The intersection points are given by $(\pm \hat{\beta_{1}}, \pm \hat{\beta_{2}})$.}
    \label{fig:constant_nonlocality}
\end{figure}
\end{proof}
The set of locally realistic states then lie in the closed region bounded by $\mathcal{V}$ and the curve given by $\mathcal{M}=1$ (see Fig. \ref{fig:nonlocality}, and note that $\beta_{3} \neq 0$ in this graph).\\
\\
The four intersection points (when they exist) of these curves are found to be at $(\pm \hat{\beta_{1}}, \pm \hat{\beta_{2}})$, where 
 \begin{subequations}
 \begin{equation*}
     \hat{\beta_{1}} = \sqrt{m_1 m_2}/\beta_{4},
 \end{equation*}
 \begin{equation*}
     \hat{\beta_{2}} = \sqrt{-(m_1 - \beta_{4}^{2})(m_2 - \beta_{4}^{2})}/\beta_{4}.
 \end{equation*}
 \end{subequations}
As the circle and ellipse are both centred on the origin, they intersect only when the circle radius $r_{B}=\sqrt{k-\beta_{4}^{2}}$ is valued between the semi-major and semi-minor axes $a,b$. This can be expressed in terms of $k$:
\begin{equation}
    m_2 \leq k \leq m_2 + \beta_{4}^{2}.
\end{equation}
One can immediately note that $k=m_1 + m_2$ cannot fall below the lower bound above. For $k$ above the upper bound, the circle $B=k$ entirely surrounds the ellipse, and the curve $\mathcal{M}=k$ is given by only the ellipse.\\
\\
For Group 2 $X$-states with general $\tau$'s, one can compute an upper bound for $\mathcal{M}$ using the validity criteria, which when written in terms of $B$ and $\text{det}M$ give us (for type \textsc{I} states):
\begin{subequations}\label{eq:validity_crit_B}
\begin{equation}
    \sqrt{B - 2\text{det}M + (\tau_{1}+\tau_{2})^{2}} \leq 1 + \beta_{0},
\end{equation}
\begin{equation}
    \sqrt{B + 2\text{det}M + (\tau_{1}-\tau_{2})^{2}} \leq 1 - \beta_{0}.
\end{equation}
\end{subequations}
Combining these to get conditions on $B$ and $U$ and taking sign transformations on $\tau_{2}$ and $\beta_{0}$ to account for type \textsc{II} states, one obtains
\begin{equation*}
    B \leq 1 + \beta_{0}^{2} - (\tau_{1}^{2} + \tau_{2}^{2}),
\end{equation*}
\begin{equation*}
    U^{2} \leq (1-\beta_{0}^{2})^{2} -2B(\tau_{1}^{2} + \tau_{2}^{2}) - (-1)^{t}8\tau_{1}\tau_{2}\text{det}M,
\end{equation*}
which give the following upper bound on $\mathcal{M}$:
\begin{equation}
    \mathcal{M} \leq \beta_{0}^{2} + \frac{1}{2}\left(1 + \beta_{0}^{2} - (\tau_{1}^{2}+\tau_{2}^{2}) + \sqrt{(1-\beta_{0}^{2})^{2} - 2B(\tau_{1}^{2}+\tau_{2}^{2}) -(-1)^{t} 8\tau_{1}\tau_{2}\text{det}M } \right).
\end{equation}
When $\tau_{i}=0$ this reduces to $\mathcal{M}_{\tau=0} \leq 1 + \beta_{0}^{2}$, which also holds for the nine Mermin-states $\mathcal{Q}_{i}$, as they differ from Group 2 $X$-states only by their $\tau$-parameters. Saturating this bound to the maximal case $\mathcal{M}=2$ is a necessary and sufficient condition for purity, as can be shown:
\begin{prop}
For $\rho$ being a Group 2 $X$-state or one of the nine $\mathcal{Q}_{i}$-states with $\tau_i = 0$, it violates the Bell Inequality maximally iff is pure.
\end{prop}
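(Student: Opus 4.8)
The plan is to pit two scalar functions of $\rho$ against each other: the purity $\operatorname{Tr}(\rho^2)$ and the Bell quantity $\mathcal{M}$, each re-expressed through the generalised parameters $\beta_0$ and $M$. Since the sixteen operators $\sigma_\mu\otimes\sigma_\nu$ (with $\sigma_0=I$, $\sigma_1=X$, $\sigma_2=Y$, $\sigma_3=Z$) are pairwise Hilbert--Schmidt orthogonal, writing $\rho=\frac14\sum_{\mu\nu}T_{\mu\nu}\,\sigma_\mu\otimes\sigma_\nu$ gives $\operatorname{Tr}(\rho^2)=\frac14\sum_{\mu\nu}T_{\mu\nu}^2$. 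For a Group~2 or $\mathcal{Q}_i$ state with $\tau_i=0$ the only nonzero off-identity coefficients are $\beta_0$ together with the four entries of $M$, so $\operatorname{Tr}(\rho^2)=\frac14\bigl(1+\beta_0^2+B\bigr)$ with $B=\operatorname{Tr}M^{T}M$. Hence purity, $\operatorname{Tr}(\rho^2)=1$, is equivalent to the single identity $\beta_0^2+B=3$. I would record this first, as it turns the statement into a purely algebraic equivalence.

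Next I would fix the target value: by the Horodecki criterion maximal Bell violation (the Tsirelson point $2\sqrt2$ of the CHSH operator) is exactly $\mathcal{M}=2$, and the previously established bound gives $\mathcal{M}_{\tau=0}\le 1+\beta_0^2$, while validity forces $|\beta_0|\le 1$, so $\mathcal{M}\le 2$ automatically; the content is characterising saturation. From the validity conditions \eqref{eq:validity_crit_B} specialised to $\tau_1=\tau_2=0$ I would extract the two inequalities $B\le 1+\beta_0^2$ and $U\le 1-\beta_0^2$ (using $U\ge 0$ and $1-\beta_0^2\ge 0$), where $U=\sqrt{B^2-4(\det M)^2}$ and $m_{1,2}=\tfrac12(B\mp U)$.

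For the forward direction, suppose $\mathcal{M}=2$. Then $\mathcal{M}\le 1+\beta_0^2\le 2$ forces $\beta_0^2=1$, and validity gives $U\le 1-\beta_0^2=0$, hence $U=0$ and $m_1=m_2=B/2\le 1=\beta_0^2$. Since $\beta_0^2\ge m_1$ we sit in the second branch of \eqref{eq:nonlocality}, so $\mathcal{M}=\beta_0^2+\tfrac12(B+U)=1+B/2=2$, giving $B=2$; therefore $\beta_0^2+B=3$ and $\rho$ is pure. For the converse, assume $\rho$ pure, so $\beta_0^2+B=3$. Validity gives $B\le 1+\beta_0^2$, whence $3-\beta_0^2\le 1+\beta_0^2$, i.e. $\beta_0^2\ge 1$; combined with $\beta_0^2\le 1$ this pins $\beta_0^2=1$ and $B=2$. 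Then $U\le 1-\beta_0^2=0$ yields $U=0$ and $m_1=m_2=1=\beta_0^2$, so again the second branch applies and $\mathcal{M}=\beta_0^2+\tfrac12(B+U)=2$.

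The trace computation and the rearrangement of \eqref{eq:validity_crit_B} into bounds on $B$ and $U$ are routine. The one genuinely delicate step, which I would treat most carefully, is the interplay of the two branches in the piecewise formula \eqref{eq:nonlocality} for $\mathcal{M}$: one must verify that at the extremal configuration $\beta_0^2=m_1=m_2=1$ the inequality $\beta_0^2\ge m_1$ holds, so that $\mathcal{M}=\beta_0^2+m_2$ (rather than $\mathcal{M}=B$) is the operative expression, and also that the first branch $\mathcal{M}=B$ cannot independently reach $2$ under validity --- indeed $B\le 1+\beta_0^2\le 2$ forces $m_1=B/2\le 1\le\beta_0^2$, which immediately sends any would-be saturating state back into the second branch. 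This confirms that $\mathcal{M}=2$ and purity cut out exactly the same set of states.
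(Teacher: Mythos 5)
Your proof is correct and takes essentially the same route as the paper's: purity is reduced to $\beta_0^2+B=3$, and the bound $\mathcal{M}_{\tau=0}\le 1+\beta_0^2$ together with the validity consequences $B\le 1+\beta_0^2$, $U\le 1-\beta_0^2$, $|\beta_0|\le 1$ pins down $\beta_0^2=1$, $U=0$, $B=2$ in both directions. Your one refinement is to note that $\beta_0^2=1$ forces $U=0$ and $m_1=B/2\le 1=\beta_0^2$, so the second branch of \eqref{eq:nonlocality} is always the operative one at saturation; this slightly streamlines the paper's case-by-case treatment of the two branches (whose ``former case'' in the converse is in fact vacuous at the extremal configuration), but it is a tidying of the same argument rather than a different approach.
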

\begin{proof}
It is known \cite{horodecki2001separability} that a state written in Pauli operator form is pure (i.e. $\text{Tr}(\rho^{2})=1)$ iff $\sum_{i}\tau_{i}^{2} + \sum_{i,j}\beta_{ij}^{2} = 3$. For our considerations this reduces to $\beta_{0}^{2} + B = 3$ (recall that $B = \sum_{i=1}^{4}\beta_{i}^{2}$).\\
($\Rightarrow$) Allowing the state to be maximally nonlocal, one has that $\mathcal{M}=2$ and thus by the upper bound given above, $\beta_{0}^{2} = 1$. For the former case in \eqref{eq:nonlocality}, maximal nonlocality gives $B=2$ and thus purity. For the latter case, the validity conditions \eqref{eq:validity_crit_B} with $|\beta_{0}|=1$ give $B = \pm 2\text{det}M$, and then $U=0$. Then one has
\begin{equation*}
    \mathcal{M} = \beta_{0}^{2} + \frac{1}{2}(B+U),
\end{equation*}
\begin{equation*}
    \Rightarrow 2 = 1 + B/2 \Rightarrow B=2,
\end{equation*}
giving $\beta_{0}^{2} + B = 3$.\\
($\Leftarrow$) Taking the former case in \eqref{eq:nonlocality}, one can write $\mathcal{M} = B = 3 - \beta_{0}^{2}$. It is known that $\mathcal{M}$ is at most 2 \cite{horodecki1995violating} and it can be seen by the validity conditions that $|\beta_{0}|$ is at most 1. $\mathcal{M}=3-\beta_{0}^{2}$ then requires that $\beta_{0}^{2} = 1$, and $\mathcal{M}=2$. For the latter case in \eqref{eq:nonlocality}, one squares and adds the validity conditions \eqref{eq:validity_crit_B} to get
\begin{equation*}
    1 + \beta_{0}^{2} \geq B = 3 - \beta_{0}^{2},
\end{equation*}
\begin{equation*}
    \Rightarrow \beta_{0}^{2} \geq 1,
\end{equation*}
as $|\beta_{0}|$ is at most 1, one recovers that $\beta_{0}^{2}=1$ and thus $B=2$. By the previous argument, $\beta_{0}^{2}=1 \Rightarrow U=0$ and then $\mathcal{M} = 1 + 2/2 = 2$.
\end{proof}

One thus recovers the fact that the known examples of two-qubit states that maximally violate Bell-inequality are pure \cite{horodecki1995violating} and, as also indicated by several other studies, that for mixed states a large amount of entanglement seems to be necessary to obtain violation of Bell-inequalities \cite{munro2001}. Using our description of $X$-states of Group 2 and Mermin states with generalised parameters, one can represent the region of entanglement and violation of Bell-inequality in the parameter space of  Figures \ref{fig:Validity_separability_graph} and \ref{fig:validity_separability_graph_colours_main} as depicted in Figure 11:
\begin{figure}[!h] \centering
 \includegraphics[width=13cm]{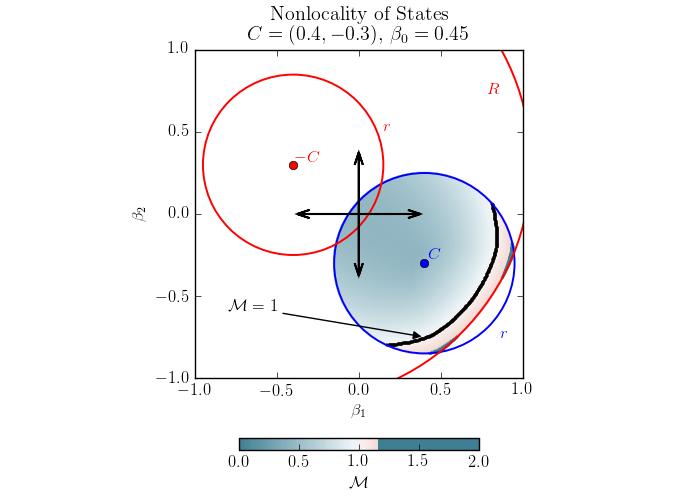}
 \caption{Graph of nonlocality measure $\mathcal{M}$ for various hyperplane states with $\tau_i=0$, $C := (\beta_{4}, \beta_{3}) = (0.4,-0.3)$ and $\beta_0=0.45$. Values of $\beta_1$ and $\beta_2$ are plotted on the $x,y$-axes, respectively. Values for $\mathcal{M}$ are computed for states within the validity region $\mathcal{V}=(C,r)\cap(-C,R)$ and given by the red-blue scale. A set of states parametrised by constant $\mathcal{M}$ forms an ellipse with focal points at $C$, $-C$ in the plane, the example $\mathcal{M}=1$ shown in black (for the subset within $\mathcal{V}$). As can be seen, only a small proportion of valid states are nonlocal $(\mathcal{M}>1)$, and a high amount of entanglement is needed to satisfy this condition.}
 \label{fig:nonlocality}
\end{figure}

\section{Conclusion}
In this paper we have considered the $15$ types of $X$-states following a remark of Rau \cite{rau2009algebraic} who suggested to define $X$-states from the algebraic structure of the defining two-qubit operators used to decompose the state in the Pauli basis. We showed that these $15$ types split in two groups $9+6$ when we consider their entanglement properties. The Group 2 ($9$ types) is the only one that can produce entangled states. A uniform treatment of those states is proposed and some criteria in terms of their parameters have been proposed for states with maximally-mixed subsystems. The introduction of those parameters allows us to give a representation of the validity, entanglement and Bell-violation area. One also proposed an alternative geometric definition of $X$-states as perp-hyperplanes of the symplectic polar space $\mathcal{W}(3,2)$. This new definition establishes an interesting connection between the finite geometric picture introduced in quantum information to study configuration of two-qubit operator and the two-qubit density matrices. In particular, we started to study hyperplanes-states, i.e. density matrices which involve two-qubit operators defining a geometric hyperplane. In this line many questions related to the concepts of Veldkamp line and Veldkamp space, the space of geometric hyperplanes \cite{Saniga2007}, can be addressed. When we restrict ourself to density matrices with maximally-mixed subsystems one observed that all hyperplane-states of $\mathcal{W}(3,2)$ coincide with the generalised $X$-states with maximally mixed subsystems. However, if one does not consider the restriction on the subsystems, Mermin-states and $X$-states are not equivalent anymore and it could be interesting in a future work to differentiate those states in terms of their quantum properties. Another direction would be to consider three-qubit $X$-states where similar algebraic structure show up \cite{vinjanampathy2010generalized} and where the finite geometric picture of the three-qubit Pauli group reveals fascinating properties \cite{levay2017magic,saniga2020magic}.

\section*{Acknowledgments}
This work was supported, in part, by the Slovak Research and Development Agency under the contract $\#$ SK-FR-2017-0002, as well as by the Slovak VEGA Grant Agency, Project $\#$ 2/0004/20. This work was also supported by the National Research Development and Innovation Office of Hungary within the Quantum Technology National Excellence Program (Project No. 2017-1.2.1-NKP-2017-0001). In France the project was supported by the French ``Investissements d'Avenir'' programme, project ISITE-BFC (contract ANR-15-IDEX-03) and the EUR-EIPHI Graduate School (Grant No. 17-EURE-0002).

\appendix
\section{The $15$ Fano planes of PG$(3,2)$}\label{appA}
The geometry of PG$(3,2)$ comprises $15$ points, $35$ lines and $15$ planes. All the $15$ planes corresponding to $15$ types of $X$-states can be labelled by the $15$ points as explained in the introduction:
\begin{equation}
 F_p=\{q\in \text{PG}(3,2),\sigma(p,q)=0\}.
\end{equation}
For the convenience of the reader we provide in Table \ref{tab:fano} the explicit list of the $15$ Fano planes of PG$(3,2)$ with respect to the Group $1/2$ splitting.
\begin{table}[!h]
\centering
 \begin{tabular}{|c|c|c|}
 \hline
  Group & $p\in \text{PG}(3,2)$ & $F_p$ \\
  \hline 
  $1$   & $[0:0:0:1] \leftrightarrow IX$ & $XX,YX,ZX,XI,YI,ZI,IX$\\
        & $[0:0:1:0] \leftrightarrow IZ$ & $XZ,YZ,ZZ,XI,YI,ZI,IZ$\\
        & $[0:0:1:1] \leftrightarrow IY$ & $XY,YY,ZY,XI,YI,ZI,IY$\\
        & $[0:1:0:0]\leftrightarrow XI$ & $XX,XY,XZ,IX,IY,IZ,XI$\\
        & $[1:0:0:0]\leftrightarrow ZI$ & $ZX,ZY,ZZ,IX,IY,IZ,ZI$\\
        & $[1:1:0:0]\leftrightarrow YI$ & $YX,YY,YZ,IX,IY,IZ,YI$\\
        \hline
   $2$  &  $[0:1:0:1] \leftrightarrow XX$ & $XX,YY,YZ,ZY,ZZ,IX,XI$\\
         &  $[0:1:1:0] \leftrightarrow XZ$ & $XZ,YY,YX,ZX,ZY,IZ,XI$\\
          &  $[0:1:1:1] \leftrightarrow XY$ & $XY,YX,YZ,ZX,ZZ,IY,XI$\\
           &  $[1:0:0:1] \leftrightarrow ZX$ & $ZX,XY,XZ,YY,YZ,IX,ZI$\\
            &  $[1:0:1:0] \leftrightarrow ZZ$ & $ZZ,YY,XX,XY,YX,IZ,ZI$\\
             &  $[1:0:1:1] \leftrightarrow ZY$ & $ZY,XX,XZ,YX,YZ,IY,ZI$\\
              &  $[1:1:0:1] \leftrightarrow YX$ & $YX,XY,XZ,ZY,ZZ,IX,YI$\\
               &  $[1:1:1:0] \leftrightarrow YZ$ & $YZ,XX,XY,ZX,ZY,IZ,YI$\\
                &  $[1:1:1:1] \leftrightarrow YY$ & $YY,XX,XZ,ZX,ZZ,IY,YI$\\
                \hline
 \end{tabular}
\caption{The $15$ Fano planes of PG$(3,2)$ according to the Group 1/2 splitting.}\label{tab:fano}
\end{table}

\bibliographystyle{plain}

\bibliography{biblio}

\end{document}